\documentclass{article}

\usepackage[preprint]{neurips_2026}

\usepackage[utf8]{inputenc} 
\usepackage[T1]{fontenc}    
\usepackage{hyperref}       
\usepackage{url}            
\usepackage{booktabs}       
\usepackage{amsfonts}       
\usepackage{nicefrac}       
\usepackage{microtype}      
\usepackage{xcolor}         

\usepackage{amsmath,amssymb,mathtools,amsthm}
\usepackage{thmtools}
\usepackage[capitalise]{cleveref}

\providecommand{\R}{\mathbb{R}}     

\providecommand{\E}{\mathbb{E}}     
\DeclareMathOperator{\sign}{sign}   

\DeclarePairedDelimiter{\abs}{\lvert}{\rvert}

\DeclarePairedDelimiterX{\inner}[2]{\langle}{\rangle}{#1,\,#2}

\theoremstyle{plain}
\newtheorem{theorem}{Theorem}[section] 

\newtheorem{proposition}[theorem]{Proposition}

\newtheorem{lemma}[theorem]{Lemma}

\theoremstyle{definition}
\newtheorem{definition}[theorem]{Definition}

\theoremstyle{remark}
\newtheorem{remark}[theorem]{Remark}
\crefname{theorem}{Theorem}{Theorems}
\Crefname{theorem}{Theorem}{Theorems}

\crefname{definition}{Definition}{Definitions}
\Crefname{definition}{Definition}{Definitions}
\crefalias{definition}{definition}

\crefname{proposition}{Proposition}{Propositions}
\Crefname{proposition}{Proposition}{Propositions}
\crefalias{proposition}{proposition}

\crefname{lemma}{Lemma}{Lemmas}
\Crefname{lemma}{Lemma}{Lemmas}
\crefalias{lemma}{lemma}

\crefname{corollary}{Corollary}{Corollaries}
\Crefname{corollary}{Corollary}{Corollaries}
\crefalias{corollary}{corollary}

\crefname{remark}{Remark}{Remarks}
\Crefname{remark}{Remark}{Remarks}
\crefalias{remark}{remark}

\crefname{algorithm}{Decision Tree}{Decision Trees}
\Crefname{algorithm}{Decision Tree}{Decision Trees}
\crefalias{algorithm}{algorithm}

\usepackage{float}          
\usepackage{subcaption}     
\usepackage{thm-restate}
\usepackage{algorithm}
\floatname{algorithm}{Decision tree}
\usepackage{algpseudocode}

\usepackage{enumitem}   
\usepackage{makecell}
\usepackage{multirow}

\crefname{equation}{Equation}{Equations}
\crefformat{equation}{Equation~#2#1#3}   
\Crefformat{equation}{Equation~#2#1#3}   
\crefname{figure}{Figure}{Figures}
\Crefname{figure}{Figure}{Figures}
\crefname{table}{Table}{Tables}
\Crefname{table}{Table}{Tables}

\theoremstyle{definition}
\newtheorem{assumption}{Assumption}[section]
\crefname{assumption}{Assumption}{Assumptions}
\Crefname{assumption}{Assumption}{Assumptions}

\title{Do covariates explain why these groups differ? The choice of reference group can reverse conclusions in the Oaxaca-Blinder decomposition}

\author{%
  Manuel Quintero\thanks{Equal contribution.} \\
  Massachusetts Institute of Technology, Cambridge, MA, USA
  \And
  Advik Shreekumar$^*$ \\
  University of California, Berkeley, Berkeley, CA, USA
  \And
  William T.~Stephenson \\
  Massachusetts Institute of Technology, Cambridge, MA, USA
  \And 
  Tamara Broderick \\
  Massachusetts Institute of Technology, Cambridge, MA, USA
}

\newcommand\blfootnote[1]{%
    \begingroup
    \renewcommand\thefootnote{}\footnote{#1}%
    \addtocounter{footnote}{-1}%
    \endgroup
  }

\begin{document}

\maketitle
\blfootnote{DISTRIBUTION STATEMENT A. Approved for public release. Distribution is unlimited.}

\begin{abstract}
Scientists often want to explain why an outcome is different in two groups. For instance, differences in patient mortality rates across two hospitals could be due to differences in the patients themselves (covariates) or differences in medical care (outcomes given covariates). The Oaxaca–Blinder decomposition (OBD) is a standard tool to tease apart these factors. It is well known that the OBD requires choosing one of the groups as a reference, and the numerical answer can vary with the reference. To the best of our knowledge, there has been no systematic investigation into whether the choice of OBD reference can yield different substantive conclusions and how common this issue is. In the present paper, we give existence proofs in real and simulated data that the OBD references can in fact yield substantively different conclusions. Our empirical exercises find that this sensitivity is more common when the OBD is extended to more complex regression models, including a pretrained transformer. Our theoretical and empirical results together establish that these conclusion reversals are not entirely driven by model misspecification, small data, or adversarial parameter choices. Our results suggest that practitioners should always report both directions of the OBD; that modern machine learning and large datasets do not automatically resolve the conclusion reversal problem; and that further work on this problem is needed.
\end{abstract}

\section{Introduction}
\label{sec:intro}
As a motivating example, suppose a data scientist at Hospital $H$ notices that mortality rates are noticeably higher at Hospital $H$ than at Hospital $K$.
She knows the medical literature links many risk factors ($X$), such as high blood pressure, to mortality ($Y$).
The observed differences in mortality could be due to differences in these risk factors between the two groups of patients (distribution of $X$) or to differences in how well each hospital manages these risk factors (distribution of $Y \mid X$).
If she can figure out how much each of these components is driving the difference in mortality rates, her hospital may be able to change its policy to improve patient care.
How can she use the available information to form hypotheses about what drives the difference in mortality?

The modern explainability literature in machine learning boasts a bevy of methods for attributing outcome behavior to covariates \citep{Dwivedi2023}.
These include Shapley values \citep{Shapley1953, Lundberg2017, chen2025shapley}, Functional ANOVA \citep{stone1994, hooker2004discovering, hookerGeneralizedFunctionalANOVA2007, lengerich2020purifying, fumagalli2025fanova}, Accumulated Local Effects \citep{apley2020}, and partial dependence plots \citep{friedman2001greedy, liu2025trees}.
However, such methods are not designed to answer this data scientist's question.
They attribute variation in outcomes within a single population to the covariates, holding the conditional distribution $Y \mid X$ fixed.
But, the data scientist is interested in why an outcome varies between two populations, where the $Y \mid X$ may vary.
See \cref{appendix:xAI} for more discussion on why these methods do not solve the data scientist's question.

Instead, the data scientist might adapt the Oaxaca--Blinder decomposition (OBD), which attributes disparities between two groups to $X$ and $Y \mid X$. 
The OBD was developed for linear models \citep{oaxaca1973,blinder1973} and remains widely used in this form --- e.g., to analyze gender wage gaps \citep{paradaContzen2025}, educational attainment differences \citep{barhaim2023}, and health disparities \citep{cartwright2021}. Recent work has explored pairing the OBD with flexible functional decompositions \citep{quintero2025,bach2024heterogeneity,quintas2024multiply,mao2025double,flachaire2025decomposing}; see \cref{appendix:obd_generalized} for more discussion of these methods.

Since the original paper \citep{oaxaca1973}, it has been well known that the OBD requires choosing one of the two groups (Hospital $K$ or $H$ in our example) as a reference, and that this choice affects the numerical result.
This dependence on the reference choice is known as the ``index number problem.''
The index number problem would be less concerning if both references always yield the same substantive conclusions;
in our example, if both indicate that differences in the distribution of risk factors widen the mortality gap between the two hospitals.
But if one reference choice finds that risk factors widens the gap while the other finds that they narrow it, the choice of reference \emph{would} matter for qualitative conclusions and subsequent decision making.

We are not aware of an investigation into whether the choice of reference groups can yield different \emph{substantive} conclusions, either
for the original OBD with linear regression or for recent extensions to more complex, nonlinear regression models.
In the present work, after reviewing the OBD and the index number problem (\cref{sec:counterfactual_OBD}), we 
give an existence proof (with real data) that the OBD \emph{can} yield two substantively different conclusions (\cref{sec:real_example}).\footnote{We release all of our code to replicate the experiments at the following repository, \url{https://github.com/manuelquinteroc/ReferenceOaxacaBlinder}.}
We analyze a range of data (healthcare and U.S.\ Census data), and we show that the problem persists for the more complex, nonlinear extensions of the OBD.
In fact, we show that these extensions, including one with a recent foundation model, seem to suffer much more frequently from the issue than
a simple linear model.
In \cref{sec:signflips}, we further demonstrate theoretically that this phenomenon is not an artifact of 
model misspecification, small data size, or adversarial parameter choices.
We suggest best practices for data analysts in \cref{sec:conclusion}.

\section{Setup and review of the Oaxaca--Blinder decomposition (OBD)}
\label{sec:counterfactual_OBD}

We first review the index number problem and OBD, defining them at the population level where we have full information about all variables of interest.
Then we describe how to proceed with real data.


\subsection{Differences in means as a counterfactual exercise} 
\label{sec:counterfactual_example}

As a first step toward describing the OBD, we illustrate the index number problem by framing it as a counterfactual exercise as in \citet{fortinLemieuxFirpo2011}.
We consider an observed outcome $Y \in \mathbb{R}$ and covariates $X$ in some space $\mathcal{X}$. We take two groups, $H$ and $K$. We assume that the tuple $(X,Y)$ is independently and identically distributed within each group $g$ with corresponding expectation $\E_g$. We suppose we observe a gap in the mean outcomes, $\E_H[Y]$ and $\E_K[Y]$, and would like to understand to what extent the gap is due to differences in the distribution of $X$ versus differences in the distribution of $Y\mid X$ between the two groups. 
For the moment, we assume we have access to population-level quantities, including all expectations.

To analyze the gap between $\E_H[Y]$ and $\E_K[Y]$, we might construct a counterfactual driven only by differences in covariates.
Again taking our example, where $Y$ represents mortality and $X$ represents risk factors, we might ask:
(A) \emph{How would Hospital $H$'s mortality rate change if its patients had the risk factor distribution of patients at Hospital $K$?}
But, this question starts from the patient group at Hospital $H$ as a reference. We could instead start from group $K$ and ask:
(B) \emph{How would Hospital $K$'s mortality rate change if its patients had the risk factor distribution of patients at Hospital $H$?}


More generally, we might ask:
(A) \emph{How much would Group $H$'s mean outcome ($Y$) change if Group $H$ kept its outcome--covariate relationship ($Y \mid X$) but had the covariate ($X$) distribution of Group $K$?}
or 
(B) \emph{How much would Group $K$'s mean outcome ($Y$) change if Group $K$ kept its outcome--covariate relationship ($Y \mid X$) but had the covariate ($X$) distribution of Group $H$?}

To address the general questions, it will be useful to compute the mean of a counterfactual $Y$ that has the outcome--covariate relationship ($Y \mid X$) of group $g$ and the covariate ($X$) distribution of group $g'$. By the law of total expectation, this mean equals
$\E_{g'}[\E_g[Y \mid X]]$.
Then the difference in mean outcomes between the two groups can be decomposed into differences between each group's mean
and the chosen counterfactual mean. The decompositions corresponding to questions (A) and (B) above are, respectively, 
\cref{eq:OB_H_Counterfactual,eq:OB_K_Counterfactual} below.
\begin{align}
\Delta_Y := \E_H[Y] - \E_K[Y] 
    \label{eq:OB_H_Counterfactual}
    &=\underbrace{\E_H[Y] - \E_K[\E_H[Y \mid X]] }_{E^{(H)}} 
     + \underbrace{\E_K[\E_H[Y \mid X]] - \E_K[Y]}_{U^{(H)} } \\
     \label{eq:OB_K_Counterfactual}
     &=\underbrace{\E_H[Y] - \E_H[\E_K[Y \mid X]] }_{U^{(K)}} 
     + \underbrace{\E_H[\E_K[Y \mid X]] - \E_K[Y] }_{E^{(K)}}.
\end{align} 
%
We follow the traditional naming and notation for these terms \citep{fortinLemieuxFirpo2011}.
The term corresponding to changing the covariates is called the
``explained component'' since this part of the difference is explained by the observed
covariates. The remaining term, corresponding to changing the conditional $Y \mid X$ distribution,
is called the ``unexplained component'' since it is not explained by the observed covariates. We denote these components as $E$ and $U$, respectively, with superscript denoting the reference group choice.

We might interpret $E$ and $U$ as the respective contributions of the covariates and outcome--covariate relationship to the 
difference in outcome means between groups $H$ and $K$. The \textbf{index number problem}, then, represents the observation
that in general, $E^{(K)} \neq E^{(H)}$ and $U^{(K)} \neq U^{(H)}$.

\textbf{A conceptual resolution?}
Before proceeding, we observe that there is not an immediate and general conceptual resolution of the index number problem.
It is not clear to us in the hospital example above that one direction is more natural, or even that the two directions represent distinct policy questions.

Generic responses to the index number problem do not resolve it without relying on arbitrary choices or compromising the interpretation of results; see \cref{sec:existing_index_number} for detailed discussion, which we summarize here.
First, approaches like those of \citet{oaxacaRansom1998,neumark1988} define a new reference group by aggregating $H$ and $K$. However, there are many ways to perform this aggregation, introducing a new, still-arbitrary choice.
Second, use of Shapley values \citep{Shapley1953} could call for averaging the explained or unexplained components for each reference group.
However, we show that the choice of reference group can reverse the sign of either component, complicating their interpretation. 
Shapley-style approaches thus average two questionable quantities and obscure information when the averaged components have opposite signs.

In practice, we do not see a consensus approach, even within fields of study.
For instance, in economic analyses of wage gaps, some argue that a single reference group makes the most sense to use in context, as in \citet{dinardoFortinLemieux1996}'s study of unionization, the minimum wage, and economic inequality.
Often, the choice is a single reference group that aggregates the two original references in some way \citep{oaxacaRansom1994,reimers1983,cotton1988,kassenboehmer2014distributional,sharafRashad2016,rahimiNazari2021,Allen2022,bachan2022genderchancellor}.
However, we also observe published work that takes either men or women as the reference group \citep{nielsen2000wage,piazzalunga2019increase,leythienne2021paygapeu,tao2024gender,paradaContzen2025}.
Some papers report results for both references \citep{ONeill2006,fortinLemieuxFirpo2011,sharafRashad2016,rahimiNazari2021}. 
Still others report results for just one reference, with no argument for excluding the other reference \citep{zhang_2019, ayubiShahbaziKhazaei2024,paradaContzen2025,singleton2016,lamJamiesonMittinty2021,leythienne2021paygapeu,alHanawiNjagi2022,tao2024gender}.

\subsection{Real-data Oaxaca--Blinder decomposition and drawing conclusions} 
\label{sec:ob_real}
In practice, researchers do not have access to population quantities and must instead draw conclusions from finite-data estimates.
In particular, implementing \cref{eq:OB_H_Counterfactual,eq:OB_K_Counterfactual} requires estimating $\E_g[Y]$, typically with the sample mean, and the counterfactual mean $\E_{g'}[\E_g[Y \mid X]]$, which requires more thought. 
Practitioners typically estimate counterfactual means by (1) using data from $g$ to compute an estimator of $\E_g[Y \mid X]$ (essentially a regression), and (2) using data from $g'$ to take the sample average across the estimator from the first step.
Concretely, for each $g \in {\{H, K\}}$, suppose we have $N_g$ observations of $\{X_n, Y_n\}$. Let $\hat{M}_g(X)$ be an estimator for $\E_g[Y \mid X]$.
Then the following equation implements \cref{eq:OB_H_Counterfactual}:
\begin{align}	
    \hat{\Delta}_Y
    \label{eq:NOBD_H_estimated}
    &=\underbrace{ \frac{1}{N_H} \sum_{n=1}^{N_H} Y_n - \frac{1}{N_K} \sum_{n=1}^{N_K} \hat{M}_H(X_n)}_{\hat{E}^{(H)}} 
     + \underbrace{\frac{1}{N_K} \sum_{n=1}^{N_K} \hat{M}_H(X_n) - \frac{1}{N_H} \sum_{n=1}^{N_H} Y_n}_{\hat{U}^{(H)} }.
\end{align}

In the literature, the term \emph{OBD} historically refers to using \cref{eq:OB_H_Counterfactual,eq:OB_K_Counterfactual} with the further choice of linear regression to compute $\hat{M}_g(X)$.
We emphasize that the use of linear regression is a modeling choice, and that it is possible to choose nonlinear estimators for $\E_g  [ Y | X ]$.
For instance, \citet{quintas2024multiply,flachaire2025decomposing,mao2025double} explore generalizations of the OBD that use machine learning methods to estimate $\E_g  [ Y | X ]$, and derive asymptotic properties for such procedures.
To distinguish from the standard OBD, we will refer to the case where $\hat{M}_g(X)$ is a nonlinear estimator as the \emph{nonlinear Oaxaca-Blinder decomposition} (NOBD).

\textbf{Drawing conclusions from the (N)OBD.}
It is common to draw substantive conclusions based on the signs (and statistical significance) of the empirical explained and unexplained components \citep{oaxacaRansom1998, jann2008,fortinLemieuxFirpo2011}. Returning to the hospital example from above, suppose Hospital $H$ has lower mortality ($\Delta_Y < 0$), and serves as the reference group. 
If the (N)OBD explained component has a negative sign, we might conclude that $H$ can attribute
its lower mortality rate in part to patient characteristics, such as healthier blood pressure. 
Analogously, if the (N)OBD unexplained component has a significant negative sign, we might attribute group $H$'s lower
mortality in part to the quality of hospital care.
In either case, we might also require that the observed components be statistically distinguishable from zero to form a conclusion; without significance, we might hesitate to attribute differences in mortality to covariates or care quality. 
For the rest of the paper, we will consider a practitioner drawing their conclusions based on the (statistically significant) sign of the explained or unexplained component.

\section{Real-data example: Gender gap in ICU data}
\label{sec:real_example}
The index number problem reflects that the two reference choices for the (N)OBD could yield two different sets of values for the explained and unexplained components. We are not aware of an illustration in the literature that the two directions (for either the OBD or NOBD) can lead to \emph{substantively} different conclusions. We supply such an illustration next for a real-data analysis in healthcare.

\textbf{Data.} To construct an example, we examine data from the PhysioNet cohort of intensive care unit (ICU) patients collected for a study of in-hospital mortality \citep{goldberger2000physiobank}, where the outcome $Y$ is binary, indicating whether the patient died in the hospital or not. The dataset contains routine measurements taken at admission, including heart rate (HR), mean arterial pressure, temperature, and urine output. These values are recorded for every patient at the time of entry into the ICU and provide a description of their initial physiological state. To simulate various real-data analyses where practitioners might have access to only a subset of these patients, we construct $139$ versions of this dataset with various patient subsets; see \cref{appendix:icu_details}.
For reference groups, we consider males and females.


\textbf{Analysis.}
As discussed in \cref{sec:ob_real}, practitioners often draw conclusions based on the sign and statistical significance of the (N)OBD explained and unexplained components.
To explore whether these conclusions can be robust to the choice of reference group, we compute the explained and unexplained components for each of our $139$ data subsets across one foundation model and four classic models. The foundation model is TabPFN, a $25$ million parameter, transformer-based model, pre-trained to map training datasets to predictive distributions $Y|X$ \citep{hollmann2023tabpfn}. The other four models are linear regression,\footnote{Despite the binary outcome $Y$, it is common practice in applied health and policy research to fit separate linear models (via ordinary least squares) for the relationship between a binary outcome and covariates in each group, and then apply the OBD \citep{jann2008, edokaChangesCatastrophicHealth2017, sujin_LPM_OB, mweembaGapSelfRatedHealth2023}. In fact, this use of the linear model for a binary outcome is so common that it has a name, the ``linear probability model.''} logistic regression, XGBoost \citep{chen2016xgboost}, and a two-layer fully-connected neural network.
We assess statistical significance using $1{,}000$ bootstrap replications.
\cref{tab:flip_summary_main} counts the number of sign flips across these models at various levels of statistical significance.
Concerningly, we see that \emph{every} model contains at least one sign flip, and more complex models tend to have even more flips; for TabPFN, we see that more than  half of the datasets show opposite signs for at least one of the explained or unexplained components, and $13.7\%$ show significant sign flips at the $5\%$ level.


\begin{table}[!ht]
    \caption{Number of data subsets (out of $139$) exhibiting at least one sign flip (in either the explained or unexplained component), by model. Columns $10\%$, $5\%$, and $1\%$ report the number of flipped subsets that are significant at the corresponding level.}
    \label{tab:flip_summary_main}
    \centering
    \small
    \begin{tabular}{lcccc}
    \toprule
    Model & Total flips & $10\%$ & $5\%$ & $1\%$ \\
    \midrule
    Linear      & 27 &  2 &  2 &  2 \\
    Logistic    & 29 &  2 &  2 &  1 \\
    Neural net  & 62 & 29 & 17 &  6 \\
    XGBoost     & 62 & 17 & 10 & 3 \\
    TabPFN     & 76 & 22 & 19 & 5 \\
    \bottomrule
    \end{tabular}
\end{table}

\subsection{A deeper dive on one data subset}

\cref{tab:flip_summary_main} shows that sign flips are common.
But what do these flips really mean in practice?
And are they merely an artifact of noise or instead a structural phenomenon?
As is often the case in studying complex machine learning models, studying linear regression provides a tractable proxy problem.
In the present study, linear regression is also of direct interest since the OBD (with linear regression) is widely used in practice.
So, for the rest of the paper, we focus on \cref{eq:OB_H_Counterfactual,eq:OB_K_Counterfactual} with linear regression (i.e., the OBD instead of NOBD).
To start answering our questions, we examine one particular data subset leading to a significant sign flip for linear regression in \cref{tab:flip_summary_main}.
We will see what it means for sign flips to change meaningful conclusions and also see that flips have a natural structural interpretation.

The particular data subset we consider covers patients with an elevated heart rate.
Medical practitioners often look at moderate elevations in heart rate as a first indication of stress, infection, or concerning changes in blood flow and circulation. In the present example, we consider patients whose admission heart rate lies between the $50$th and $75$th percentiles. This patient set, which we refer to as HR quartile two, reflects a clinical presentation in which the heart rate is elevated but not extreme.

\textbf{Two different conclusions.} 
We report the OBD for each reference group in \cref{tab:hr_q2_decomposition}. We next interpret the substantive conclusions
we might draw from each reference group. First, consider women as the reference. The explained component is positive ($0.021$), with a statistically significant difference from zero.
This result suggests that differences in admission covariates place men at higher mortality risk. A medical practitioner might therefore examine whether various factors before admission differ across gender; these factors include early differences in blood flow and circulation, underlying disease burden, or overall severity at admission.

Second, consider men as the reference. The explained component is negative ($-0.007$), though the difference from zero is not statistically significant. In this case, the observed covariate differences seem to predict \emph{lower} (or equal) mortality for men relative to women rather than higher. In other words, the same covariates would now be interpreted as offering men a slight mortality advantage. This interpretation is substantively different than the interpretation that results from using women as the reference.


\begin{table}[!ht]
  \caption{Oaxaca--Blinder decompositions for HR quartile two. The total gap is men's mean minus women's mean. We compute standard errors (in parentheses) via bootstrap resampling with $1{,}000$ replications. 
    We calculate p-values using a normal approximation (Wald test), where the test statistic is the coefficient divided by its bootstrap standard error. 
    Significance levels: * $p<0.10$, ** $p<0.05$, *** $p<0.01$.}
  \label{tab:hr_q2_decomposition}
  \begin{center}
    \begin{small}
        \begin{tabular}{lcccr}
          \toprule
          Reference & Explained & Unexplained & Total gap  \\
          \midrule
              Women 
    & \makecell{$0.021^{***}$ \\ $(0.007)$}
    & \makecell{$0.014$ \\ $(0.025)$}
    & \makecell{$0.035$ \\ $(0.023)$}
    \\
    Men 
    & \makecell{$-0.007$ \\ $(0.011)$}
    & \makecell{$0.042^{*}$ \\ $(0.025)$}
    & \makecell{$0.035$ \\ $(0.023)$}
    \\
          \bottomrule
        \end{tabular}
    \end{small}
  \end{center}
  \vskip -0.1in
\end{table}

\textbf{How does the sign flip arise?}
To understand how the sign flip arises, it will help to introduce more notation for the linear regression case.
We now restrict to real-valued covariates $X \in \mathbb{R}^d$, and we assume we fit a linear regression to each group $g$: $\alpha_g + X^\top \beta_g$
with intercept $\alpha_g \in \mathbb{R}$ and slope $\beta_g \in \mathbb{R}^d$. We let
$\mu_g := \E_g[X],
\Delta \mu := \mu_H - \mu_K,
\Delta \beta := \beta_H - \beta_K,
\Delta \alpha := \alpha_H - \alpha_K,
$
and use a hat to denote estimators of these quantities (cf.\ \cref{sec:ob_real}).

With these choices, the OBD
(\cref{eq:OB_H_Counterfactual,eq:OB_K_Counterfactual}) becomes
\begin{equation}
    \Delta_Y = \underbrace{\Delta \mu^\top \beta_H}_{E^{(H)}} + \underbrace{\mu_K^\top \Delta \beta + \Delta \alpha}_{U^{(H)}} = \underbrace{\Delta \mu^\top \beta_K}_{E^{(K)}} + \underbrace{\mu_H^\top \Delta \beta + \Delta \alpha}_{U^{(K)}}.
\end{equation}
The first equality is the OBD with reference group $H$, and the second with reference group $K$.

With this notation in hand, we observe that $\Delta\hat{\mu}$ is fixed between $\hat{E}^{(H)}$ and $\hat{E}^{(K)}$. So it must be that some components of $\hat{\beta}_H$ and $\hat{\beta}_K$ are changing sign to give $\hat{E}^{(H)} := \Delta\hat{\mu}^T \hat{\beta}_H$ and $\hat{E}^{(K)} := \Delta\hat{\mu}^T \hat{\beta}_K$ opposite signs.
In \cref{fig:sign_flip_overview} (left), we show the fitted linear models of in-hospital mortality on admission heart rate, holding all other covariates fixed at their group means.
We see that the signs of the slopes differ.
In Appendix \cref{tab:delta_mu_betas}, we report values for all elements of each vector $\hat{\beta}_{g}$ in the full analysis. 
We see that there are multiple covariates $c$ for which $\hat{\beta}_{\textrm{men},c} < 0 < \hat{\beta}_{\textrm{women},c}$.
And as $\Delta\hat{\mu}_c$ has the same sign for all of these covariates, each one pushes towards a sign flip. 
In aggregate, these changes add enough to cause a sign flip in the explained component.

\begin{figure}[!ht]
    \centering
    \begin{subfigure}[t]{0.48\linewidth}
        \centering
        \includegraphics[width=\linewidth]{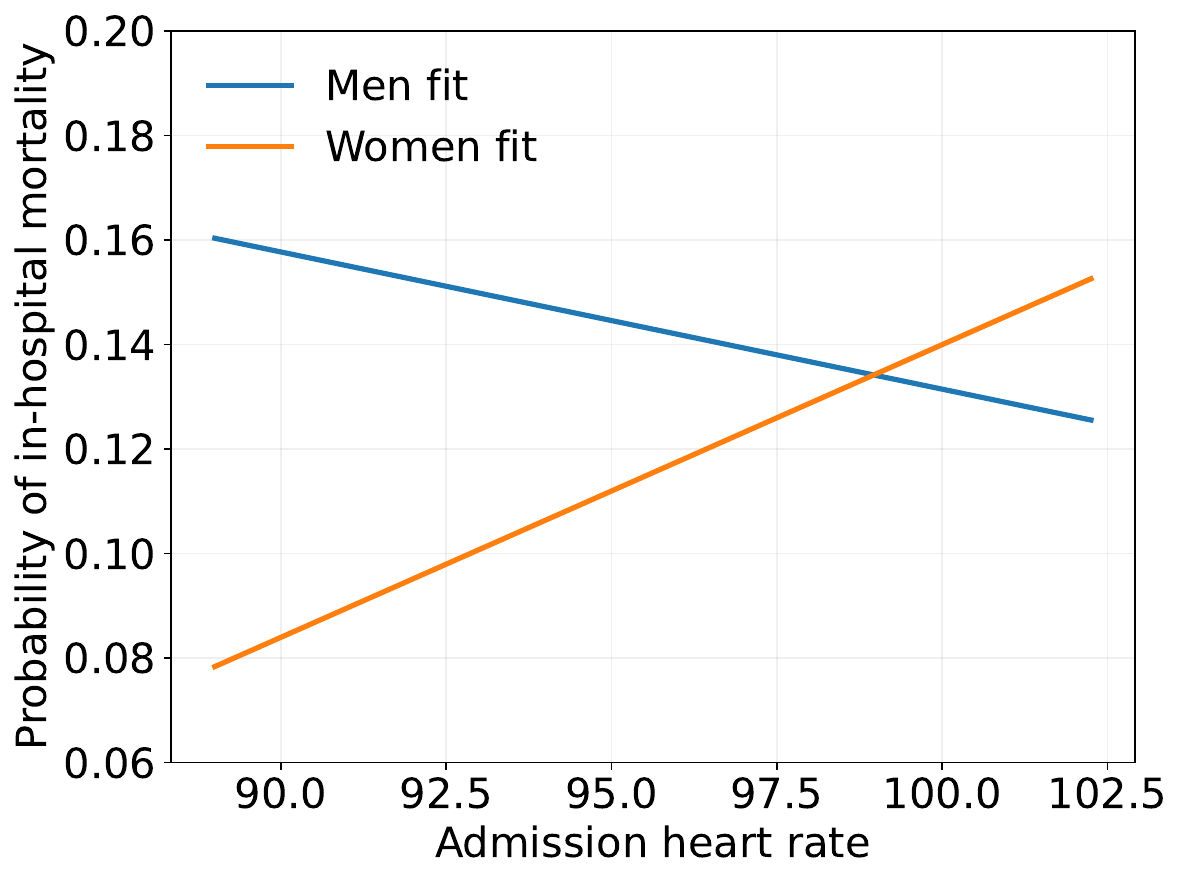}
        \label{fig:hr_quartile2}
    \end{subfigure}
    \hfill
    \begin{subfigure}[t]{0.48\linewidth}
        \centering
        \includegraphics[width=\linewidth]{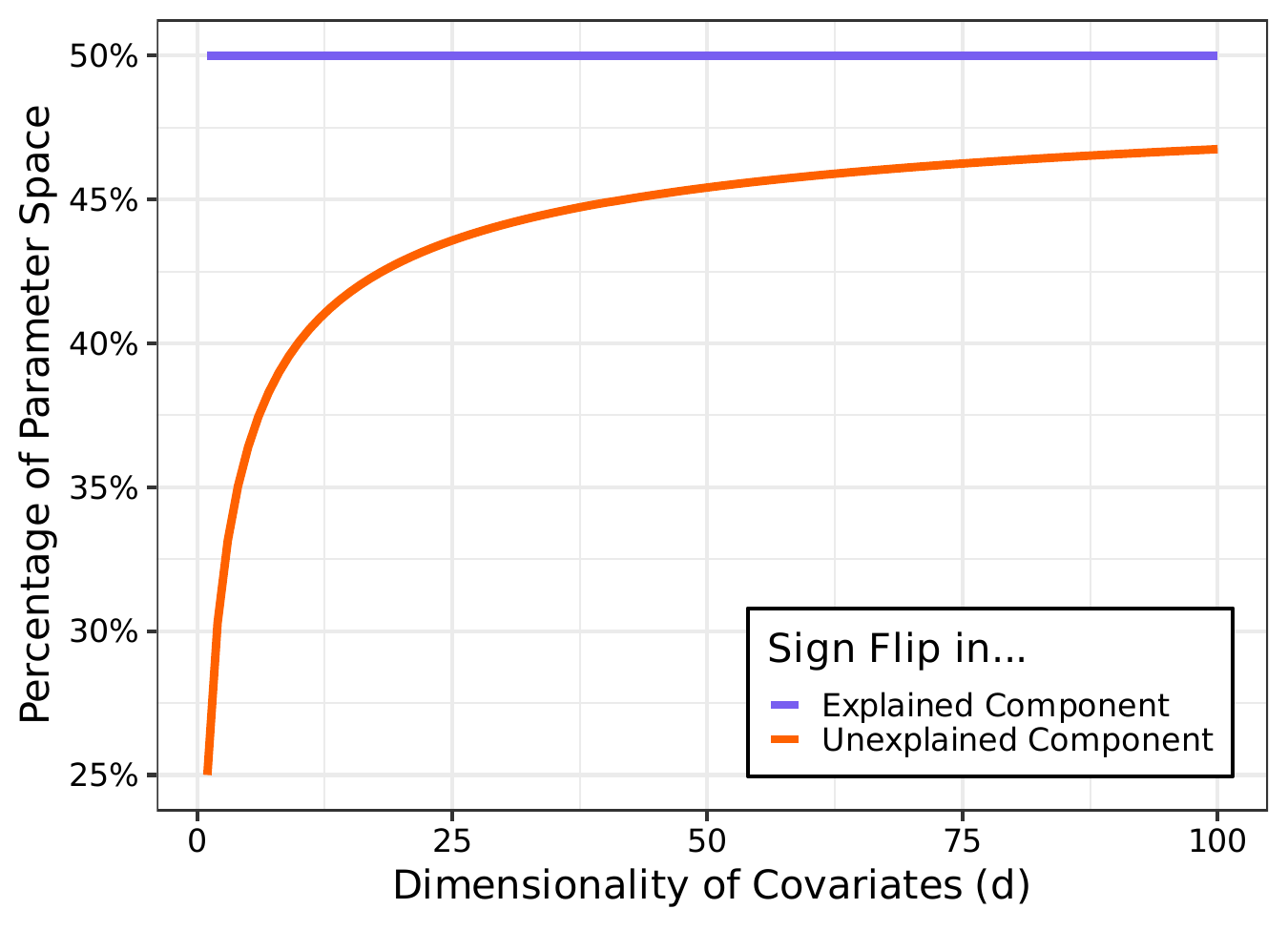}
        \label{fig:prob_unexplained}
    \end{subfigure}
    \caption{Left: Linear fit of mortality on admission heart rate for each group when holding all other covariates fixed at their group means. Right: Percentage of parameter space leading to sign flips. See \cref{remark:computing_irwin_hall} for how we compute the percentage for the unexplained component.}
    \label{fig:sign_flip_overview}
\end{figure}

\textbf{Limitations.}
Our analysis reveals sign flips in the explained and unexplained components of the (N)OBD in a real dataset.
It is a priori possible that some of our choices may be unusual; these choices include the dataset itself, our choice of covariates, our choice of outcome, or our particular subset of data. And it is possible that sign flips may not occur in other settings.
To address this possibility, we provide a second illustration analyzing differences in income as well as health insurance status using U.S.\ Census data in \cref{appendix:us_labor_force}.
The latter data have previously been analyzed by \citet{bach2024heterogeneity}, reducing the concern that our empirical results are due to atypical choices on our part.
We also give a theoretical account of sign flips in \cref{sec:signflips}, where we establish that they correspond to a nontrivial region of the parameter space.
\section{How common are sign flips?} 
\label{sec:signflips}

Our examples above serve as an existence proof that reference group choice can reverse the (N)OBD's substantive conclusions by flipping signs of the explained or unexplained components.
However, a few major questions remain.
First, do these conclusion reversals occur in the absence of finite-sample noise; i.e., do practitioners with large-scale datasets still need to worry about them?
Second, even if conclusion reversals can occur, do they occur outside of mathematically adversarial examples (i.e., ``measure zero'' in some sense)?

In this section, we mathematically prove the answer to both questions is yes.
\cref{sec:simulated_example} complements our argument with an example where a sign flip occurs even without sampling noise.

In what follows, we study conclusion reversals that involve the explained or unexplained component changing signs from positive to negative, or vice versa.
For mathematical tractability and applicability to practical usage, we continue our focus on the OBD (vs.\ NOBD). 
To show that sign flipping is not an artifact of finite samples, we work in the population limit; since we work in the population limit, significance is no longer a concern.
And to show that sign flipping is not a ``measure zero'' phenomenon, we show that sign flips of both components represent a large proportion of the parameter space, in a sense we make precise below.

\subsection{Characterizing sign flips}
\label{sec:sign_flip_characterizations}

We start by characterizing sign flips in the explained and unexplained components, respectively; these characterizations will help us later establish
that sign flips of either component are common in parameter space.

\textbf{Explained component.} Explained component sign flips are relatively straightforward to characterize.
\begin{definition}
	\label{defn:sign_flip_explained}
	We say the OBD \emph{explained component flips signs} if the sign of the explained term, $\Delta\mu^T \beta_g$, depends on the reference group. That is,
	\begin{equation}
        \label{eq:sign_flip_condition_explained_neq}
		\sign\!\left(\Delta\mu^T \beta_H\right) \neq \sign\!\left(\Delta\mu^T \beta_K\right).
	\end{equation}
\end{definition}

From the definition, we can immediately see the following alternative characterization.
\begin{remark}
    \label{rmk:explained_ordering_sensitivity}
    The OBD explained component flips signs if and only if the following two conditions hold:
    \begin{equation}
    \label{eq:sign_flip_condition_explained}
        \Delta\mu^T \beta_H \neq \Delta\mu^T \beta_K, \quad \min\{\Delta\mu^T \beta_H,\ \Delta\mu^T \beta_K\} < 0 < \max\{\Delta\mu^T \beta_H,\ \Delta\mu^T \beta_K\}.
    \end{equation}
\end{remark}
Since $\Delta\mu$ is constant across analyses, sign flips in the explained component are straightforwardly driven
by differences in sign in the slope between groups.



\textbf{Unexplained component.}
Unlike the explained component, sign reversals in the unexplained component arise from the interaction between slope and intercept differences.
\begin{definition}
    \label{def:unexplained_sign_flip}
	Recall $\Delta\beta := \beta_H - \beta_K$ and $\Delta\alpha := \alpha_H - \alpha_K$. The OBD \emph{unexplained component flips signs} if the sign of the unexplained term, $\mu_{g'}^T \Delta\beta + \Delta\alpha$, depends on the reference group. That is
	\begin{equation}
		\sign\left( \mu_K^T \Delta\beta + \Delta\alpha \right) \neq \sign\left( \mu_H^T \Delta\beta + \Delta\alpha \right).
	\end{equation}
\end{definition}

The following alternative characterization for sign flips of the unexplained component is analogous to \cref{rmk:explained_ordering_sensitivity}.
\begin{restatable}[OBD unexplained sign flips]{proposition}{OBSignFlips}
    \label{thm:unexplained_ordering_sensitivity}
    The OBD unexplained component flips signs if and only if the following two conditions hold:
    \begin{equation}
    \label{eq:sign_flip_condition}
        \mu_H^T \Delta\beta \neq \mu_K^T \Delta\beta, \quad \min\{\mu_H^T \Delta\beta, \mu_K^T \Delta\beta\} < -\Delta\alpha < \max\{\mu_H^T \Delta\beta, \mu_K^T \Delta\beta\}.
    \end{equation}
\end{restatable}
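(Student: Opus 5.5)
The plan is to reduce the statement to the same elementary fact that underlies \cref{rmk:explained_ordering_sensitivity}, applied to the two real numbers
\[
a := \mu_K^T \Delta\beta + \Delta\alpha, \qquad b := \mu_H^T \Delta\beta + \Delta\alpha ,
\]
which are exactly the unexplained components $U^{(H)}$ and $U^{(K)}$ from \cref{def:unexplained_sign_flip}.

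First I will fix the meaning of ``flips signs.'' The paper says it studies reversals from positive to negative or vice versa, and \cref{rmk:explained_ordering_sensitivity} is stated with strict inequalities. I will therefore read $\sign(a)\neq\sign(b)$ as ``one of $a,b$ is strictly positive and the other strictly negative.'' Under this reading the flip condition is equivalent to
\[
\min\{a,b\} < 0 < \max\{a,b\}.
\]
This is just a two-case check: either $a<0<b$ or $b<0<a$.

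Next I translate this condition through the common shift by $\Delta\alpha$. The map $t\mapsto t+\Delta\alpha$ is strictly increasing, so it commutes with $\min$ and $\max$:
\[
\min\{a,b\} = \min\{\mu_H^T\Delta\beta,\ \mu_K^T\Delta\beta\} + \Delta\alpha ,
\]
and likewise for $\max$. Subtracting $\Delta\alpha$ throughout turns $\min\{a,b\}<0<\max\{a,b\}$ into
\[
\min\{\mu_H^T\Delta\beta,\ \mu_K^T\Delta\beta\} < -\Delta\alpha < \max\{\mu_H^T\Delta\beta,\ \mu_K^T\Delta\beta\},
\]
which is the second condition in \cref{eq:sign_flip_condition}. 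Every step is an equivalence, so both directions of the ``iff'' follow at once.

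It remains to handle the first condition, $\mu_H^T\Delta\beta \neq \mu_K^T\Delta\beta$. It is implied by the strict chain of inequalities, since $\min<\max$ forces the two quantities to differ. So it is redundant but harmless in the ``only if'' direction, and in the ``if'' direction it is simply not needed. I will say this explicitly, to explain why the statement lists it in parallel with \cref{rmk:explained_ordering_sensitivity}.

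The only real obstacle is the boundary case in which one component equals zero, for example $a=0<b$. There the literal $\sign$ function gives different values, yet \cref{eq:sign_flip_condition} fails. I will handle this by stating the strict-opposite-signs convention at the outset, consistent with the paper's framing, and noting that this convention is required for the equivalence to hold.
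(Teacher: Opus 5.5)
Your proposal is correct and is essentially the paper's argument. The paper works through the same two cases ($U^{(H)}>0>U^{(K)}$ or the reverse) explicitly in each direction, whereas you package them via $\min\{a,b\}<0<\max\{a,b\}$ and the fact that a common shift by $\Delta\alpha$ commutes with $\min$ and $\max$; that packaging also makes it clear why the condition $\mu_H^T\Delta\beta \neq \mu_K^T\Delta\beta$ is redundant.

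Your strict-opposite-signs convention is genuinely needed, not cosmetic. With the literal three-valued $\sign$, a case where exactly one unexplained component equals zero has differing signs but violates the strict chain. The paper's ``only if'' direction silently skips that case.
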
    
See \cref{sec:sign_flip_proof} for the proof.
From \cref{thm:unexplained_ordering_sensitivity}, we see that group-specific slopes serve a different role in sign flips in the unexplained case relative to the explained case.
Suppose we hold fixed the covariate means $\mu_H$ and $\mu_K$. Then scaling up the magnitude of the vector change-in-slopes ($\Delta \beta$) induces sign flips for a greater range of values 
of the scalar change-in-intercepts ($\Delta \alpha$).
That is, larger magnitudes of \( \Delta \beta \) create ``more room'' for sign flips in the unexplained component.

\subsection{Are sign flips an adversarial phenomenon?}
\label{sec:sign_flip_fraction}

After characterizing sign flips in \cref{sec:sign_flip_characterizations}, we can ask whether they occur only for adversarially chosen parameters.
One formalization of this question is to ask what proportion of the volume of parameter space corresponds to sign flips --- and in particular, whether that proportion is non-zero.

An immediate challenge with talking about volume within a parameter space is that the parameter space inherits
its units from the covariates and response; in particular, any individual unit choice is somewhat
arbitrary (e.g., centimeters vs.\ millimeters), but also different covariates need not be 
comparable (e.g., temperature vs.\ weight). 
The usual way to alleviate this incompatibility in one group is to center the covariates and set their scale (e.g., standard deviation) to be 1. 
We can use similar normalization to center the covariates such that $\mu_K = \mathbf{0} \in \R^d$ and scale the covariates such that $\Delta\mu = \mathbf{1} \in \R^d$ (i.e., $\mu_H = \mathbf{1}$).
Our standardization choice ($\mu_K= \mathbf{0},\mu_H=\mathbf{1}$) is
possible as long as $\mu_K \neq \mu_H$ elementwise (see \cref{lemma:units}).

In the theory that follows, we will find it convenient to restrict our focus to a bounded range of parameters. Namely, we
assume that, for some $M > 0$ and all groups $g$, the entries of $\beta_g$ and $\alpha_g$ lie in $[-M,M]$.
Note, however, that we do not restrict the choice of $M$, and the proportions we compute in all
cases end up having no dependence on $M$.

As the next two results demonstrate, under these choices, the fraction of parameters leading to sign flips is fairly large.
First, for the explained component, we find that, under our conditions, half of the parameter space $(\beta_H, \beta_K, \alpha_H, \alpha_K)$ corresponds to sign flips.
\begin{proposition} \label{prop:explained_fraction}
	Suppose $\mu_K = \mathbf{0}$ and $\mu_H = \mathbf{1}$.
	Choose any $M > 0$.
	Let $C_M \subset \R^{2d}$ be a cube with side length $2M > 0$ centered at the origin, and suppose $(\beta_H, \beta_K)$ jointly lie in this cube. Let $C_{flip} \subset C_M$ be the settings of $(\beta_H, \beta_K)$ that lead to a sign flip in the explained component.
	Then
	\begin{equation}
		\frac{\mathrm{Volume}(C_{flip})}{\mathrm{Volume}(C_M)} = \frac{1}{2}.
	\end{equation}
\end{proposition}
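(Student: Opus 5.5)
With $\Delta\mu = \mathbf{1}$, the explained components are $E^{(H)} = \mathbf{1}^T\beta_H = \sum_{c} \beta_{H,c}$ and $E^{(K)} = \mathbf{1}^T\beta_K$. By \cref{rmk:explained_ordering_sensitivity}, a sign flip therefore occurs exactly when these two sums are distinct and have strictly opposite signs. Any case where one sum equals zero lies on a hyperplane, which has Lebesgue measure zero in $\R^{2d}$. Up to measure zero, then,
\[
C_{flip} = \{(\beta_H,\beta_K) \in C_M : S_H S_K < 0\}, \qquad S_H := \mathbf{1}^T\beta_H,\quad S_K := \mathbf{1}^T\beta_K .
\]
The plan is to compute this volume ratio by viewing it as a probability: draw $(\beta_H,\beta_K)$ uniformly on $C_M$.

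Under the uniform distribution on the product cube $[-M,M]^{2d}$, the vectors $\beta_H$ and $\beta_K$ are independent, each uniform on $[-M,M]^d$. This follows because $C_M = [-M,M]^d\times[-M,M]^d$ and Lebesgue measure factorizes. Consequently $S_H$ and $S_K$ are independent.

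The key step is that each sum is symmetric about zero. The map $\beta \mapsto -\beta$ preserves the cube $[-M,M]^d$ and preserves Lebesgue measure, so $S_H \overset{d}{=} -S_H$. Moreover, $P(S_H = 0) = 0$ because $\{S_H=0\}$ is a hyperplane. Together these give $P(S_H>0)=P(S_H<0)=1/2$, and the same holds for $S_K$.

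By independence,
\[
P(S_H S_K<0) = P(S_H>0)P(S_K<0) + P(S_H<0)P(S_K>0) = \tfrac14+\tfrac14 = \tfrac12,
\]
which is the claimed volume ratio. The intercepts $\alpha_g$ do not enter the explained component. Including them would add a product factor $[-M,M]^2$ that cancels in the ratio. Nothing in the argument depends on $M$.

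No step here is genuinely hard. The only points needing care are, first, justifying that the boundary cases (a sum exactly $0$, or equal sums, which force both sums to be $0$ in a flip) have measure zero, and second, stating the symmetry argument via the measure-preserving reflection rather than through an explicit Irwin--Hall computation. Alternatively, one can avoid probability language. The reflection $(\beta_H,\beta_K)\mapsto(\beta_H,-\beta_K)$ is a measure-preserving bijection of $C_M$ that swaps the flip region with the no-flip region $\{S_HS_K>0\}$, up to null sets. Hence the two regions have equal volume, and each has half the volume of $C_M$. I would present this reflection argument as the main proof, since it is the shortest.
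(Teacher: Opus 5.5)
Your proof is correct and follows the same idea as the paper's: with $\Delta\mu=\mathbf{1}$ the flip condition reduces to $\sign(\mathbf{1}^\top\beta_H)\neq\sign(\mathbf{1}^\top\beta_K)$, and under the uniform measure on $C_M$ these two sums are identically distributed and symmetric about zero, so their signs disagree on half the volume. Your version is more careful than the paper's one-line argument, because it makes explicit the independence of the two sums and the null set where a sum vanishes; your reflection $(\beta_H,\beta_K)\mapsto(\beta_H,-\beta_K)$ is a cleaner phrasing of the same symmetry and does not even need independence.
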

See \cref{proof:explained_fraction} for the proof.

Second, for the unexplained component, we find that the proportion of parameter space corresponding to sign flips nears 1/2 as the dimension of the covariate vector grows.
\begin{proposition} 
\label{prop:unexplained_fraction}
	Suppose $\mu_K = \mathbf{0}$ and $\mu_H = \mathbf{1}$.
	Let $C_M \subset \R^{2d+2}$ be a cube with side length $2M > 0$ centered at the origin, and suppose $(\beta_K, \beta_H, \alpha_K, \alpha_H)$ jointly lie in this cube.
	Let \( I_2 \sim \text{Irwin-Hall}(2) \) and \( J_{2d} \sim \text{Irwin-Hall}(2d) \) be independent random variables.
	The fraction of $C_M$ for which a sign flip occurs in the unexplained component is given by:
\begin{align} \label{eq:unexplained_fraction}
    P_d := \mathrm{Pr} \left( \{ I_2 > 1 \} \cap \{ J_{2d} < d + 1 - I_2 \}  \right) + \mathrm{Pr} \left( \{ I_2 < 1 \} \cap \{ J_{2d} > d + 1 - I_2 \} \right),
\end{align}
where the probability $\mathrm{Pr}$ is over the independent random variables $I_2, J_{2d}$.
Moreover, $P_d \underset{d \to \infty}{\longrightarrow} 1/2$.
\end{proposition}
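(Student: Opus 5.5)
Our approach is to recast the volume fraction as a probability under the uniform distribution on $C_M$ and then rescale to standard uniforms. If $(\beta_K,\beta_H,\alpha_K,\alpha_H)$ is uniform on $C_M$, its $2d+2$ coordinates are i.i.d.\ $\mathrm{Unif}[-M,M]$. With $\mu_K=\mathbf 0$ and $\mu_H=\mathbf 1$, the two unexplained components are $\mu_K^T\Delta\beta+\Delta\alpha=\Delta\alpha$ and $\mu_H^T\Delta\beta+\Delta\alpha=\mathbf 1^T\Delta\beta+\Delta\alpha$. By \cref{thm:unexplained_ordering_sensitivity}, a sign flip occurs exactly when $-\Delta\alpha$ lies strictly between $0$ and $S:=\mathbf 1^T\Delta\beta$. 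Up to null sets (ties and zeros have probability zero), this is the event that $\Delta\alpha$ and $S+\Delta\alpha$ have opposite signs. Writing $T:=S+\Delta\alpha$, the flip event is $\{\Delta\alpha>0,\,T<0\}\cup\{\Delta\alpha<0,\,T>0\}$.

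Next we change variables. Set $V_i=(\theta_i+M)/(2M)\sim\mathrm{Unif}[0,1]$ for each coordinate $\theta_i$, and let $\theta'_i=(M-\theta_i)/(2M)$, which is also $\mathrm{Unif}[0,1]$ by symmetry. Then
\[
\Delta\alpha=\alpha_H-\alpha_K=2M\bigl(U_{\alpha_H}+U'_{\alpha_K}-1\bigr),
\]
where $U_{\alpha_H}=(\alpha_H+M)/(2M)$ and $U'_{\alpha_K}=(M-\alpha_K)/(2M)$ are independent uniforms. Hence $\Delta\alpha=2M(I_2-1)$ with $I_2\sim$ Irwin--Hall$(2)$. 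In the same way each $\Delta\beta_j=2M(U_j+U'_j-1)$, so $S=2M(J_{2d}-d)$ with $J_{2d}\sim$ Irwin--Hall$(2d)$. Independence of $I_2$ and $J_{2d}$ follows because they are built from disjoint sets of coordinates. This gives $T=2M(I_2+J_{2d}-d-1)$, so $T<0$ iff $J_{2d}<d+1-I_2$, and $\Delta\alpha>0$ iff $I_2>1$. Substituting yields exactly \eqref{eq:unexplained_fraction}, and the factor $M$ cancels, so the fraction does not depend on $M$.

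For the limit, we work with the centered variables $A:=I_2-1\in[-1,1]$ and $B:=J_{2d}-d$. The variable $B$ is symmetric with variance $2d/12=d/6$. The flip event is $\{A>0,\,B<-A\}\cup\{A<0,\,B>-A\}$. Conditioning on $A$ and using the symmetry of $B$, we get $P_d=\mathbb E\bigl[\Pr(B>|A|\mid A)\bigr]$. Since $|A|\le 1$,
\[
\Pr(B>1)\le P_d\le \Pr(B>0)=\tfrac12 .
\]
The upper bound uses that $B$ is continuous and symmetric, so $\Pr(B>0)=\tfrac12$. For the lower bound, the central limit theorem gives $B/\sqrt{d/6}\Rightarrow N(0,1)$, so $\Pr(B>1)=\Pr\bigl(B/\sqrt{d/6}>\sqrt{6/d}\bigr)\to\tfrac12$. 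Squeezing then gives $P_d\to\tfrac12$.

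The only delicate step is the bookkeeping in the reparametrization: each difference of two uniforms must be written as a sum of two uniforms minus $1$, with the correct orientation, so that $\Delta\alpha>0$ corresponds exactly to $I_2>1$ and the threshold comes out as $d+1-I_2$. The CLT squeeze itself is routine.
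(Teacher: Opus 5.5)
Your proposal is correct. The derivation of the exact formula follows the paper's proof: you specialize \cref{thm:unexplained_ordering_sensitivity} to $\mu_K^T\Delta\beta=0$ and $\mu_H^T\Delta\beta=\mathbf{1}^T\Delta\beta$, discard null sets, and substitute $\Delta\alpha=2M(I_2-1)$ and $\mathbf{1}^T\Delta\beta=2M(J_{2d}-d)$. You are more explicit than the paper about reflecting $\alpha_K$ and $\beta_{K,j}$ so that each difference becomes a sum of uniforms.

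The limit is where you take a different and more rigorous route. The paper simply asserts that, as $d\to\infty$, the threshold $d+1-I_2$ can be replaced by $d$. It then gets $\tfrac14$ for each term from independence and the medians $1$ and $d$, but never justifies why the bounded shift $1-I_2$ becomes negligible.

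You instead condition on $A=I_2-1$ and use the symmetry of $B=J_{2d}-d$ to write $P_d=\E[\Pr(B>\abs{A}\mid A)]$. This gives the non-asymptotic sandwich $\Pr(J_{2d}>d+1)\le P_d\le\tfrac12$, and the CLT (spread $\sqrt{d/6}\to\infty$) closes the gap. Your argument supplies exactly the anti-concentration that the paper's replacement step tacitly relies on. It also shows $P_d\le\tfrac12$ for every $d$, so the convergence is from below.

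The one step worth a line of justification is $\Pr(Z_d>\varepsilon_d)\to\tfrac12$ for $Z_d\Rightarrow N(0,1)$ and $\varepsilon_d\to0$. This follows from P\'olya's theorem, or by comparing with a fixed $\varepsilon>0$ and letting $\varepsilon\downarrow 0$.
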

See \cref{proof:unexplained_fraction} for the proof.
The joint Irwin-Hall probability in \cref{eq:unexplained_fraction} is hard to evaluate analytically, so we evaluate it numerically for various values of $d$; see \cref{fig:sign_flip_overview} (right).
For even modest dimension $d$, the fraction of parameter space leading to sign flips exceeds 40\%.

While our results above assume the standardization $\mu_K=\mathbf{0}, \mu_H=\mathbf{1}$, 
we confirm by simulations that we expect similar results when we take each coordinate of $\mu_g$ uniform
in the range $[-M,M]$ as well. See \cref{appendix:sign_flip_simulation} for details.

\subsection{Reconciling the theoretical and empirical prevalence of sign flips}
\label{sec:reconciling_theory_empirics}

Our main focus in \cref{sec:sign_flip_fraction} was establishing that a non-zero proportion of parameter
space yielded sign flips.
One limitation of this theory is that it assumes regression parameters are uniformly distributed, when in practice we may expect they follow non-uniform distributions.
Indeed, our earlier empirical exercise from \cref{tab:flip_summary_main}, finds that sign flips are less common than a direct application of theory would suggest. 
Even if we count total flips and ignore statistical significance, we see sign flips in 19\% of (i.e., 27 of 139) tested subsets for the OBD.
We now describe and interpret constraints on the parameter space that may produce fewer sign flips than implied by our theoretical results.

\textbf{Explained component.} First, consider the explained component. Under the standardization from \cref{sec:sign_flip_fraction},
we can simplify the characterization of when a sign flip happens.
\begin{remark} \label{rmk:no_explained_sign_flip}
	Suppose $\mu_K = \mathbf{0}$ and $\mu_H = \mathbf{1}$.
	Then there is a sign flip if and only if $\sign (\mathbf{1}^\top \beta_{H}) \neq \sign( \mathbf{1}^\top \beta_{K})$.
\end{remark}
This observation follows immediately from \cref{defn:sign_flip_explained}.
In some contexts, we may expect the elements of $\beta_g$ to be of the same sign in both groups.
Consider the ICU mortality example from \cref{sec:real_example}.
While we may expect risk factors to have different effect sizes for men and women a priori, there is little reason to think that many factors that increase men's mortality would decrease women's (or vice versa).
\cref{rmk:no_explained_sign_flip} could then explain why we observe explained component sign flips in only 11 out of the 139 subgroups we test (\cref{tab:flip_summary_explained}).

\textbf{Unexplained component.} Next, consider the unexplained component. In this case, a sign flip intuitively requires the intercept and slope effects to pull in opposite directions across groups. When both move in the same direction, so that the group with higher baseline outcomes also exhibits higher returns to covariates, the unexplained component preserves its sign regardless of the chosen reference group.
We now formalize these ideas.
\begin{assumption}[Aligned intercept--slope effects]
\label{ass:no_sign_flip}
$
    \sign(\Delta\alpha) = \sign(\mu_H^T \Delta\beta) = \sign(\mu_K^T \Delta\beta)
$
\end{assumption}

\begin{proposition}
    \label{prop:no_sign_flip}
    Under \cref{ass:no_sign_flip}, the unexplained components 
    \( U^{(H)}\) and  \(U^{(K)}\) share the same sign.
\end{proposition}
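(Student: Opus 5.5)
The argument works directly from the linear OBD formulas $U^{(H)} = \mu_K^T \Delta\beta + \Delta\alpha$ and $U^{(K)} = \mu_H^T \Delta\beta + \Delta\alpha$. Each unexplained component is a sum of two scalars: the common intercept difference $\Delta\alpha$ and one of the two slope terms $\mu_K^T\Delta\beta$ or $\mu_H^T\Delta\beta$. \cref{ass:no_sign_flip} says these three scalars share a common sign $s \in \{-1,0,1\}$. The plan is to show that each sum inherits this sign $s$.

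We split into cases on $s$. If $s = 1$, then $\Delta\alpha > 0$ and $\mu_K^T\Delta\beta > 0$, so $U^{(H)} > 0$; similarly $\mu_H^T\Delta\beta > 0$ gives $U^{(K)} > 0$. The case $s=-1$ is symmetric, with both components strictly negative. If $s = 0$, all three scalars vanish, so $U^{(H)} = U^{(K)} = 0$ and both have sign zero. In every case $\sign(U^{(H)}) = \sign(U^{(K)}) = s$, which is the claim.

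As a cross-check, we could instead invoke \cref{thm:unexplained_ordering_sensitivity}. A flip requires $-\Delta\alpha$ to lie strictly between $\mu_H^T\Delta\beta$ and $\mu_K^T\Delta\beta$. Under the assumption, those two endpoints are either both zero (so there is no strict interval) or share the sign $s \neq 0$, while $-\Delta\alpha$ has sign $-s$. In both situations $-\Delta\alpha$ lies outside the open interval, so no flip occurs. The only point needing care, and the closest thing to an obstacle, is the degenerate zero case. There the direct computation above is cleaner than appealing to the proposition, so I would present the direct case analysis as the main proof.
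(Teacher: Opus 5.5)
Your proposal is correct and follows essentially the same route as the paper, which simply writes $U^{(H)} = \mu_K^T\Delta\beta + \Delta\alpha$ and $U^{(K)} = \mu_H^T\Delta\beta + \Delta\alpha$ and notes that adding two scalars of a common sign preserves that sign. Your explicit case split on $s \in \{-1,0,1\}$, including the degenerate $s=0$ case, just spells out what the paper's one-line argument leaves implicit; the cross-check via \cref{thm:unexplained_ordering_sensitivity} is also sound but not needed.
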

See \cref{proof:no_sign_flip} for the proof.
In some domains, intercept and slope advantages might often move in the same direction; 
\cref{prop:no_sign_flip} rules out sign flips in such cases.
To illustrate, consider an analysis of the gender wage gap that attributes the difference in average earnings between men and women to covariates (e.g., work experience) and to outcomes given covariates (e.g., how experience is remunerated).
\cref{ass:no_sign_flip} corresponds to the group with higher baseline wages ($\alpha_g$) also having higher returns to experience ($\beta_g$). 
\cref{prop:no_sign_flip} is one explanation for why sign flips are rare in our analysis of U.S. labor force data (\cref{appendix:us_labor_force}): \cref{ass:no_sign_flip} holds in 182 out of 400 comparisons we search (45.5\%), ruling out unexplained component sign flips in those cases.

However, \cref{ass:no_sign_flip} is not plausible in all domains.
In the ICU setting from \cref{sec:real_example}, the assumption requires the higher mortality population to also be more sensitive to observed covariates.
This relationship may fail to hold when important factors --- such as environmental exposure, chronic conditions, or family history of disease --- are not recorded in a researcher's dataset.
If such \textit{omitted covariates} are strong predictors of mortality, a population may have a high mortality rate that is less sensitive to \textit{observed covariates}.
Consistent with this intuition, \cref{ass:no_sign_flip} never holds across the 139 comparisons in the ICU data exercise.
In \cref{appendix:ovb}, we explore how the presence of omitted covariates can generate parameter differences $\Delta\beta$ and $\Delta\alpha$ that do preclude sign flips, without invoking \cref{ass:no_sign_flip}.


\section{Conclusion}
\label{sec:conclusion}

It is well known that the OBD depends on the choice of reference group. 
We here demonstrate that this asymmetry can result in substantively different conclusions between the two reference choices.
We present examples from real datasets as existence proofs. 
We show that sign reversals can occur in either the explained or unexplained component and are not solely products of model misspecification, small sample size, or an adversarial data generating process. 
As a practical conclusion, we recommend that data analysts using the (N)OBD always report results using both reference groups to accurately convey the degree of support behind substantive conclusions arising from the (N)OBD.

Our theoretical results show that the fraction of parameter space representing sign reversals is large for linear models and grows with the dimension of the feature space.
However, our empirical exercises reveal fewer sign reversals than our theoretical results suggest, implying that fitted linear regression parameters in real data analyses may not be well modeled by a uniform distribution over the parameter space. 
Future work can determine to what extent our theoretical findings hold for non-uniform parameter distributions, other measures of model complexity, and nonlinear models. 

\textbf{Broader impacts.}
The (N)OBD is widely used to draw conclusions and influence policy.
Our work shows that the choice of reference group in the (N)OBD can reverse substantive conclusions and characterizes conditions under which such reversals can occur.
We highlight the importance of reporting results for both references, even though this step is currently sometimes omitted in published papers
and government reports.
Thus, our work has the potential to improve decisions that arise from using the (N)OBD.
Our work does have one potential negative societal impact: that researchers learn they may be able to cherry-pick results through the choice of (N)OBD reference group.
However, a researcher with an agenda can also influence results through sample selection, feature engineering, and model choice; we think it unlikely that reference group choice is the only means to their end.
If one wishes to cherry-pick, there are already more than enough trees to harvest from, so to speak.
Further, we suspect (or, hope) that the majority of mis-use of the (N)OBD is from researchers unaware of its problems, and so hope that the net impact from our paper will be positive.


\section*{Acknowledgements}

DISTRIBUTION STATEMENT A. Approved for public release. Distribution is unlimited. This material is based upon work supported by the Combatant Commands under Air Force Contract No. FA8702-15-D-0001 or FA8702-25-D-B002. Any opinions, findings, conclusions or recommendations expressed in this material are those of the author(s) and do not necessarily reflect the views of the Combatant Commands.

Manuel Quintero was supported in part by the Fortunato and Catalina Brescia Fellowship at the Institute for Data, Systems, and Society (IDSS), Massachusetts Institute of Technology.
\bibliographystyle{icml2026}
\bibliography{OBD_signflip}


\appendix
\section{Extended literature review}
\label{appendix:lit_review_xAI}

In this appendix, we expand on the discussion in \cref{sec:intro} of why standard explainability methods do not address the hospital data scientist's question from \cref{sec:intro}, and of how the (N)OBD differs from these methods.
We then discuss how common modifications to the (N)OBD that might seem to remove the index number problem either do not or introduce new questions of arbitrariness similar to the index number problem.

\subsection{Explainability methods for attributing outcome behavior to covariates}
\label{appendix:xAI}

A common feature of modern explainability methods, including Shapley values \citep{Shapley1953, Lundberg2017, chen2025shapley}, Functional ANOVA \citep{stone1994, hooker2004discovering, hookerGeneralizedFunctionalANOVA2007, lengerich2020purifying, fumagalli2025fanova}, Accumulated Local Effects \citep{apley2020}, and partial dependence plots \citep{friedman2001greedy, liu2025trees}, is that they take as their input a fitted outcome model, or more generally a conditional mean function $x \mapsto \E[Y \mid X=x]$.
They ask how this function changes as one or more covariates change, or how much variation in the function can be attributed to particular (groups of) covariates.
This type of decomposition is well-suited to many tasks in machine learning explainability, but does not, by itself, separate differences between two populations into differences in the distribution of $X$ versus differences in the distribution of $Y \mid X$.
We elaborate on this distinction below for three common decomposition methods.

Functional ANOVA measures the importance of features in determining the output of a function $f$ and identifies underlying additive interactions between subsets of covariates \citep{hooker2004discovering, hookerGeneralizedFunctionalANOVA2007}.
It represents $f$ as a sum of an overall mean, main effects of individual covariates, pairwise interactions, tertiary interactions, etc. The definitions of these terms depends on a reference measure over the covariates that pins down the orthogonality and centering (mean zero) restrictions.
Accumulated Local Effects \citep[ALE;][]{apley2020} and partial dependence plots \citep[PDP;][]{friedman2001greedy} summarize how a fitted function changes with one covariate or a subset of covariates; for instance, how predicted mortality varies with admission heart rate, but they do not define a decomposition of the gap between two groups and at least it is not clear how to use them to do so; one can similarly define an additive decomposition of a function $f$ using PDP or ALE.
But Functional ANOVA, ALE, and PDP are tools for decomposing a \emph{single} function $f$, not for studying differences between two functions, which is the problem addressed by the (N)OBD.
We are aware of only one work applying ALE and Functional ANOVA to generalize the NOBD \citep{quintero2025}; however, \citet{quintero2025} shows this leads to serious issues beyond just the index number problem studied here.


Shapley-value-based methods are feature attribution tools that assign each covariate a portion of a prediction, or of the variation in predictions, by treating the covariates as players in a cooperative game whose payoff is the model's output \citep{Shapley1953, Lundberg2017, chen2025shapley}.
Again, though, Shapley values are useful for attributing the output of a single fitted model to features, not a situation in which there are two populations.
In \cref{sec:existing_index_number} we detail one way in which Shapley values could be used to produce a decomposition for differences between populations like the one produced by the (N)OBD, and this decomposition does not have an index number problem. However, we discuss in \cref{sec:existing_index_number} why such a possible solution actually introduces other issues.

\subsection{Adaptations of the Oaxaca--Blinder decomposition beyond linear models}
\label{appendix:obd_generalized}

The OBD was developed for linear models \citep{oaxaca1973, blinder1973} and remains widely used in this form across applied health, labor, and policy research \citep{paradaContzen2025, barhaim2023, cartwright2021}.
Still, recent work has paired the OBD with flexible ML estimators to relax the linearity assumption while retaining the between-population counterfactual structure of \cref{eq:OB_H_Counterfactual,eq:OB_K_Counterfactual}.
We review these efforts here.
\citet{bach2024heterogeneity} use double-lasso to fit a high-dimensional wage equation and quantify how the US gender wage gap varies with characteristics such as marital status, race, occupation, and education.
\citet{mao2025double} apply the double machine learning framework to the Oaxaca--Blinder decomposition, deriving $\sqrt{n}$-consistent estimators of the composition (i.e., explained) and structure (i.e., unexplained) effects under high-dimensional mixed covariates via Neyman orthogonal scores.
\citet{quintas2024multiply} introduce a multiply-robust estimator for causal change attribution that combines regression and reweighting, remains consistent under partial misspecification of the nuisance ML models, and is asymptotically normal.
\citet{flachaire2025decomposing} reformulate the decomposition in terms of potential outcomes and pair the weighted reference outcome of \citet{neumark1988} with double machine learning to avoid the common-support and trimming requirements that arise when the reference is taken to be one of the two groups.
Closer to our setting, \citet{quintero2025} examine what happens when off-the-shelf functional decompositions, specifically Functional ANOVA and ALE, are plugged into the OBD framework, and show that they can misattribute differences in $X$ to differences in $Y \mid X$.

In all of these methods, $\E_H[Y \mid X]$ and $\E_K[Y \mid X]$ are estimated by flexible $\hat{M}_H(X)$ and $\hat{M}_K(X)$, and counterfactual means are constructed as in \cref{eq:NOBD_H_estimated}.
Such methods retain the between-population target of the OBD and so address the limitation of the explainability methods above.
However, with the exception of \citet{flachaire2025decomposing}, who fix a single Neumark-style weighted reference \citep{neumark1988}, this line of work does not engage with the issue we study.
That is, even once linearity is relaxed, the (N)OBD still requires choosing one of the two groups as a reference, and this choice can affect the numerical result; more importantly, as we show, it can reverse substantive conclusions.
Improving the flexibility of $\hat{M}_g$ thus does not remove the index number problem; if anything, richer models may make the issue worse, as \cref{tab:flip_summary_main} shows for our main empirical exercise.

In short, adaptations of the OBD and explainability methods address different parts of the overall problem: explainability methods describe how a single fitted function depends on covariates, while the NOBD improves the estimation of the conditional mean functions used to form counterfactual means.
Our paper studies distinct questions --- whether the two natural (N)OBD references can yield substantively different conclusions, and how common this sensitivity is.

\subsection{Existing solutions to the index number problem}
\label{sec:existing_index_number}
A few proposed methods exist that compute decompositions similar to the (N)OBD, and yet do not have the index number problem.
We describe the approach of these options here and discuss how they do not resolve the issue of arbitrariness discussed in this paper.

\textbf{Shapley Values.} Shapley values \citep{Shapley1953} are widely used in the machine learning literature to understand a models' predictions.
A straightforward use of them would be to estimate ``explained'' or ``unexplained'' contributions to $\Delta_Y$.
This follows from the standard Shapley construction, which treats each reference-based decomposition as a ``player'' and assigns contributions by averaging marginal effects over all orderings; with two ``players'' (the two reference directions $H$ and $K$), there are two equally weighted permutations, $(H,K)$ and $(K,H)$, so the Shapley value reduces to the simple average, for the explained component, $E = \tfrac{1}{2}E^{(H)} + \tfrac{1}{2}E^{(K)}$.

However, our paper complicates the interpretation of $E^{(H)}$ and $E^{(K)}$ (and likewise for the unexplained components). 
If the explained component can take on opposite signs depending on the reference group, it is ambiguous whether the distributions of covariates widen or shrink the gap in outcomes between $H$ and $K$.
It is not clear that averaging across the reference groups appropriately resolves this ambiguity; for instance, if $E^{(H)}$ and $E^{(K)}$ can take opposite signs, averaging them destroys information about their magnitude.
For example, $\tfrac{1}{2}E^{(H)} + \tfrac{1}{2}E^{(K)}$ can take on small positive values if $E^{(H)}$ and $E^{(K)}$ are both small and positive, or both large but and of opposing signs. 
A researcher may wish to distinguish between these scenarios, but the Shapley proposal above does not.

\textbf{Pooled approaches.}
Another approach to eliminate the index number problem is to define some ``pooled'' group and then measure deviations from this group in a three-term decomposition. For example, in the case of the (linear) OBD, one defines a pooled $\beta^*$ and then computes:
\begin{equation}
  \E_H[Y] - \E_K[Y] = (\mu_H - \mu_K)^T \beta^* + \mu_H^T (\beta_H - \beta^*) + \mu_K^T(\beta^* - \beta_K).
\end{equation}
The first term is interpreted as differences due differences in covariates, the second differences due to differences in $H$'s $Y \mid X$ from the baseline, and the third differences due to differences in $K$'s $Y \mid X$ from the baseline.

The question that remains, then is how to define $\beta^*$.
A typical choice is to use some sort of ``average'' of the two groups.
One approach from \citet{oaxacaRansom1998}, defines some $w \in [0,1]$ and sets $\beta^* = (1-w)\beta_K + w\beta_H$.
A second approach, from \citet{neumark1988}, is to create a pooled dataset consisting of all samples from $H$ and $K$, and estimate $\beta^*$ from this dataset.
However, these approaches suffer from issues of arbitrariness as well.
First, how does one pick $w$ in the Oaxaca-Ransom approach?
Second, why is defining $\beta^*$ one way more reasonable than the other?
If we assume both are equally reasonable, then, given our results for the OBD, we feel that one should be highly suspicious of substantive conclusions being robust to the choice of $\beta^*$.
Finally, justifications for a choice of $w$ or $\beta^*$ in one research setting may not apply in others.
For example, in analyses of the labor market, $\beta^*$ is sometimes intended to represent the $Y \mid X$ relationship in the absence of wage discrimination \citep{neumark1988}. 
There may be no corresponding concept to serve as a reference point in other research contexts; or, there may be no feasible approach for estimating such a reference point.

\textbf{Intermediate covariate distributions.}
\citet{dinardoFortinLemieux1996} propose a variant of the OBD with a more fine-grained decomposition of the explained component, and \citet{quintero2025} propose a NOBD with a similarly fine-grained explained decomposition (\citet{quintero2025} also decomposes the unexplained component).
Both cases define a number of intermediate distributions over covariates $P_1, \dots, P_I$, where $P_1 = K_X$ and $P_I = H_X$. E.g., in the case of \citet{quintero2025}, $P_2$ is equal to $K$, except that the marginal of the first covariate is equal to the marginal of $H_X$.
In either case, for each $i = 2, \dots, I$, intermediate explained components are defined as one of:
\begin{align}
  \E_{X \sim P_i}[ \E_{Y \sim H}[Y \mid X] ] - \E_{X \sim P_{i-1}} [ \E_{Y \sim H} [Y \mid X]] \\
  \E_{X \sim P_i}[ \E_{Y \sim K}[Y \mid X] ] - \E_{X \sim P_{i-1}} [ \E_{Y \sim K} [Y \mid X]].
\end{align}  
There are two issues with such approaches.
First, \citet{dinardoFortinLemieux1996} does not change how the the unexplained component is computed over the OBD, so the same exact index number problems with the unexplained component detailed here apply to the decomposition of \citet{dinardoFortinLemieux1996}.
Second, one now has to justify not just a choice of $H$ or $K$ as a reference group, but instead a whole ordering of the $P_i$.
This problem is explicitly called out by \citet{quintero2025}, who choose to fix an arbitrary order.
Given our results here, we suspect that conclusions may not be robust to the ordering of the $P_i$, and we leave such investigation for future work.

\section{Details on Real-Data Examples}
\subsection{ICU Mortality Example}
\label{appendix:icu_details}

This appendix provides full details on how we constructed the ICU data subsets, prepared the analytic dataset, estimated the OBD, identified sign reversals, and computed bootstrap standard errors for all data subsets used in \cref{sec:real_example}. The analyses use the PhysioNet ICU cohort collected for the study of in-hospital mortality \citep{goldberger2000physiobank}. The raw cohort contains routine admission measurements for $3,997$ patients with known gender ($2,246$ women and $1,751$ men). We merge these records with the corresponding mortality outcomes and restrict attention to patients with complete information on the covariates used in the decomposition: age, ICU type, heart rate (HR), mean arterial pressure (MAP), temperature, urine output, and the indicator for in-hospital death. After removing observations with missing covariates, the final analytic dataset contains $3,429$ patients. All data subset definitions, model specifications, and statistical procedures described below are applied to this cleaned dataset.

\subsection*{Model specification}

For each data subset, we fit separate group-specific models of in-hospital mortality for men and for women. Let $g \in \{\text{men}, \text{women}\}$, let $Y$ be the indicator for in-hospital mortality, and let $X$ be the vector of admission covariates (heart rate, mean arterial pressure, temperature, urine output, age, and ICU type). For each group $g$, we fit an estimator $\hat{M}_g(X)$ of the conditional expectation $\E[Y \mid X, g]$ and use \cref{eq:OB_H_Counterfactual,eq:OB_K_Counterfactual}; see \cref{sec:ob_real}. We consider five model classes, specified as follows.

\textbf{Linear regression.} We take
\[
    \E[Y \mid X, g] \;=\; \alpha_{g} + X^{\top}\beta_{g},
\]
with the group-specific intercept $\alpha_{g}$ and slope $\beta_{g}$ estimated by ordinary least squares. Although $Y$ is binary, this linear probability model is the object to which the classical OBD applies and remains standard in applied work \citep{jann2008, edokaChangesCatastrophicHealth2017, sujin_LPM_OB, mweembaGapSelfRatedHealth2023}.

\textbf{Logistic regression.} Within each group, we standardize the covariates to zero mean and unit variance and fit a logistic regression of $Y$ on the standardized covariates with the default $L_2$ penalty.

\textbf{Neural network.} Within each group, we standardize the covariates and fit a two-hidden-layer fully-connected ReLU network with hidden widths $(32, 16)$, $L_2$ weight decay $10^{-4}$, the Adam optimizer \citep{kingma2015adam}, and max\_iter $=2000$.

\textbf{XGBoost.} We fit a gradient-boosted tree classifier with $250$ estimators, maximum depth $3$, learning rate $0.05$, row subsampling rate $0.9$, column subsampling rate $0.9$, $L_2$ leaf regularization $\lambda = 1.0$, histogram tree method, and the logistic (log-loss) objective.

\textbf{TabPFN.} We use a modern transformer-based tabular model, TabPFN \citep{hollmann2023tabpfn}\footnote{TabPFN uses a custom license: \url{https://huggingface.co/Prior-Labs/tabpfn_2_6/blob/main/LICENSE}}, a pre-trained in-context classifier that maps a training set directly to a predictive distribution for $Y \mid X$. We use the publicly released v2.6 weights and use the v7.1.1 TabPFN code for inference. We run the model locally on a single NVIDIA A100 GPU with $80$~GB of vRAM.

For all five models, we take the predicted probability of in-hospital death as $\hat{M}_g(X)$ and estimate each NOBD counterfactual mean $\E_{g'}[\E_g[Y \mid X]]$ by the within-sample average $\frac{1}{|g'|}\sum_{i \in g'} \hat{M}_g(X_i)$.

\subsection*{Data subset construction}
The goal of this analysis was not to target any particular physiological pattern, but rather to explore systematically whether sign reversals arise in realistic data partitions that a practitioner might naturally encounter.

We constructed $139$ data subsets in total. These sets include: quartiles of each continuous admission vital (heart rate, MAP, temperature, urine output);
deciles of age; ICU-type specific cohorts; clinically motivated threshold subsets (HR $>100$, MAP $<65$, Temp $> 38^\circ$C, Urine $>1000$ mL); fifty random $50\%$ subsamples of the cohort; fifty random $30\%$ subsamples of the cohort.

For each data subset and each of the five models in our model class, we compute the decomposition under both reference groups, taking men and women in turn. We say that the explained or unexplained component sign flips in that data subset if its estimated value has opposite signs under the two reference choices. To assess whether such flips are robust to sampling variability, we next describe the bootstrap procedure we use to compute standard errors and $p$-values for each component.

\textbf{Bootstrap.}
For each data subset and each model, we compute bootstrap standard errors and $p$-values for the OBD components. The bootstrap procedure follows the standard Oaxaca--Blinder framework: men and women are resampled within group with replacement, preserving group sizes. For each bootstrap replication $b=1,\dots,B$, we re-fit the corresponding group-specific model and re-estimate the OBD components.

Let $\hat\theta$ denote any Oaxaca--Blinder component (for example, the explained component under the women reference model), and let $\hat\theta^{(1)},\dots,\hat\theta^{(B)}$ denote its bootstrap replicates. The bootstrap standard error and mean are
\[
    \widehat{\operatorname{se}}(\hat\theta)
    =
    \sqrt{
    \frac{1}{B-1}\sum_{b=1}^B
    \bigl(\hat\theta^{(b)} - \bar\theta\bigr)^2
    },
    \qquad
    \bar\theta = \frac{1}{B}\sum_{b=1}^B \hat\theta^{(b)}.
\]
$P$-values are computed using a normal approximation,
\[
    p = 2\left(1 - \Phi\left(\frac{|\hat\theta|}{\widehat{\operatorname{se}}(\hat\theta)}\right)\right).
\]

We say that a sign flip is significant at level $\alpha$ if, in addition to having opposite signs across reference groups, at least one of the two reference-specific estimates has $p < \alpha$. We summarize the total number of data subsets exhibiting any sign flip under each model in \cref{tab:flip_summary_main} in \cref{sec:real_example}. Here, \cref{tab:flip_summary_explained,tab:flip_summary_unexplained} below report the same counts separately for the explained and unexplained components.

We observed that empirically, the count of data subsets exhibiting a sign flip grows with model complexity for this particular data analysis. With the linear and logistic models showing the least amount of sign flips and only a couple remain significant at conventional thresholds. While the neural network and XGBoost show a substantial fraction of data subsets exhibits significant flips even at the $5\%$ level. Importantly, many of the flipped data subsets arise within the random subsamples, indicating that the OBD can yield unstable conclusions even when applied to simple resampled partitions of the same population. Several physiologically interpretable groups also displayed sign changes. The HR quartile two data subset examined in \cref{sec:real_example} is one such case, and serves as a concrete illustration of how these reversals can appear in clinically meaningful settings.

\begin{table}[!ht]
    \caption{Number of data subsets (out of $139$) exhibiting a sign flip in the explained component, by model. Columns $10\%$, $5\%$, and $1\%$ report the number of flipped data subsets that are significant at the corresponding level.}
    \label{tab:flip_summary_explained}
    \centering
    \small
    \begin{tabular}{lcccc}
    \toprule
    Model & Total flips & $10\%$ & $5\%$ & $1\%$ \\
    \midrule
    Linear      & 11 &  2 &  2 &  2 \\
    Logistic    &  9 &  2 &  2 &  1 \\
    Neural net  & 42 & 26 & 17 &  6 \\
    XGBoost     & 33 & 17 & 10 & 3 \\
    TabPFN      & 33 & 21 & 19 & 5 \\
    \bottomrule
    \end{tabular}
\end{table}

\begin{table}[!ht]
    \caption{Number of data subsets (out of $139$) exhibiting a sign flip in the unexplained component, by model. Columns $10\%$, $5\%$, and $1\%$ report the number of flipped data subsets that are significant at the corresponding level.}
    \label{tab:flip_summary_unexplained}
    \centering
    \small
    \begin{tabular}{lcccc}
    \toprule
    Model & Total flips & $10\%$ & $5\%$ & $1\%$ \\
    \midrule
    Linear      & 17 &  0 &  0 &  0 \\
    Logistic    & 21 &  0 &  0 &  0 \\
    Neural net  & 49 & 10 &  6 &  1 \\
    XGBoost     & 47 & 3 &  1 &  0 \\
    TabPFN      & 56 & 2 &  0 &  0 \\
    \bottomrule
    \end{tabular}
\end{table}

A substantial number of these sign reversals are significant at conventional thresholds, indicating that the instability documented in \cref{sec:real_example} is an inherent behavior of the OBD in this clinical dataset.

\subsection*{HR quartile 2: results across models}
\label{appendix:hrq2}

\cref{sec:real_example} focuses on the HR quartile two data subset under the OBD. In \cref{tab:hrq2_bootstrap_models}, we report the corresponding decomposition under each of the five models in our model class. The linear, logistic, neural-network, and XGBoost results are obtained from the bootstrap procedure described above with $B = 1{,}000$ replications.

\begin{table}[!ht]
    \caption{OBD of the gender mortality gap in HR quartile two ($n = 816$), across all five models. The total gap is men's mean minus women's mean. Standard errors in parentheses are computed via bootstrap resampling with $1{,}000$ replications. $p$-values are based on a normal approximation (Wald test). Significance levels: * $p<0.10$, ** $p<0.05$, *** $p<0.01$.}
    \label{tab:hrq2_bootstrap_models}
    \centering
    \scriptsize
    \begin{tabular}{@{}lc*{4}{c}@{}}
    \toprule
    \multirow{2}{*}{Model} &
    \multirow{2}{*}{Gap (M--F)} &
    \multicolumn{2}{c}{Women reference} &
    \multicolumn{2}{c}{Men reference} \\
    \cmidrule(lr){3-4} \cmidrule(lr){5-6}
    & & Explained & Unexplained & Explained & Unexplained \\
    \midrule

    Linear
        & \makecell{$0.035$ \\ $(0.023)$}
        & \makecell{$0.021^{***}$ \\ $(0.007)$}
        & \makecell{$0.014$ \\ $(0.025)$}
        & \makecell{$-0.007$ \\ $(0.010)$}
        & \makecell{$0.042^{*}$ \\ $(0.025)$} \\[0.3em]

    Logistic
        & \makecell{$0.035$ \\ $(0.023)$}
        & \makecell{$0.026^{***}$ \\ $(0.009)$}
        & \makecell{$0.009$ \\ $(0.025)$}
        & \makecell{$-0.006$ \\ $(0.008)$}
        & \makecell{$0.041^{*}$ \\ $(0.024)$} \\[0.3em]

    Neural net
        & \makecell{$0.035$ \\ $(0.023)$}
        & \makecell{$0.001$ \\ $(0.024)$}
        & \makecell{$0.033$ \\ $(0.031)$}
        & \makecell{$0.053^{**}$ \\ $(0.022)$}
        & \makecell{$-0.018$ \\ $(0.025)$} \\[0.3em]

    XGBoost
        & \makecell{$0.035$ \\ $(0.023)$}
        & \makecell{$0.004$ \\ $(0.016)$}
        & \makecell{$0.031$ \\ $(0.025)$}
        & \makecell{$0.029^{**}$ \\ $(0.014)$}
        & \makecell{$0.006$ \\ $(0.021)$} \\[0.3em]

    TabPFN
        & \makecell{$0.035$ \\ $(0.023)$}
        & \makecell{$0.012$ \\ $(0.016$}                                                                                                                                                                                                                                                                              
        & \makecell{$0.023$ \\ $(0.026)$}
        & \makecell{$0.023^{*}$ \\ $(0.013)$} 
        & \makecell{$0.012$ \\ $(0.021)$} \\[0.3em]
    \bottomrule
    \end{tabular}
\end{table}

The linear and logistic OBD exhibit a consistent explained-component sign flip that is significant at the $1\%$ level, with very similar magnitudes: in both cases, the explained component is significantly positive under the women reference and flips to a small negative (non-significant) value under the men reference. The neural network and XGBoost both point to the men reference estimate being significantly positive at the $5\%$ level.

\cref{fig:hr_quartile2_hist} presents the histogram for the $816$ patients in HR quartile two, showing the distribution of mortality outcomes used in the linear probability models displayed in \cref{fig:sign_flip_overview} (left). \cref{tab:delta_mu_betas} shows the covariate mean differences and group-specific coefficients for HR quartile two that underlie the sign reversal documented in \cref{sec:real_example}.

\begin{figure}[!ht]
    \centerline{\includegraphics[width=0.5\linewidth]{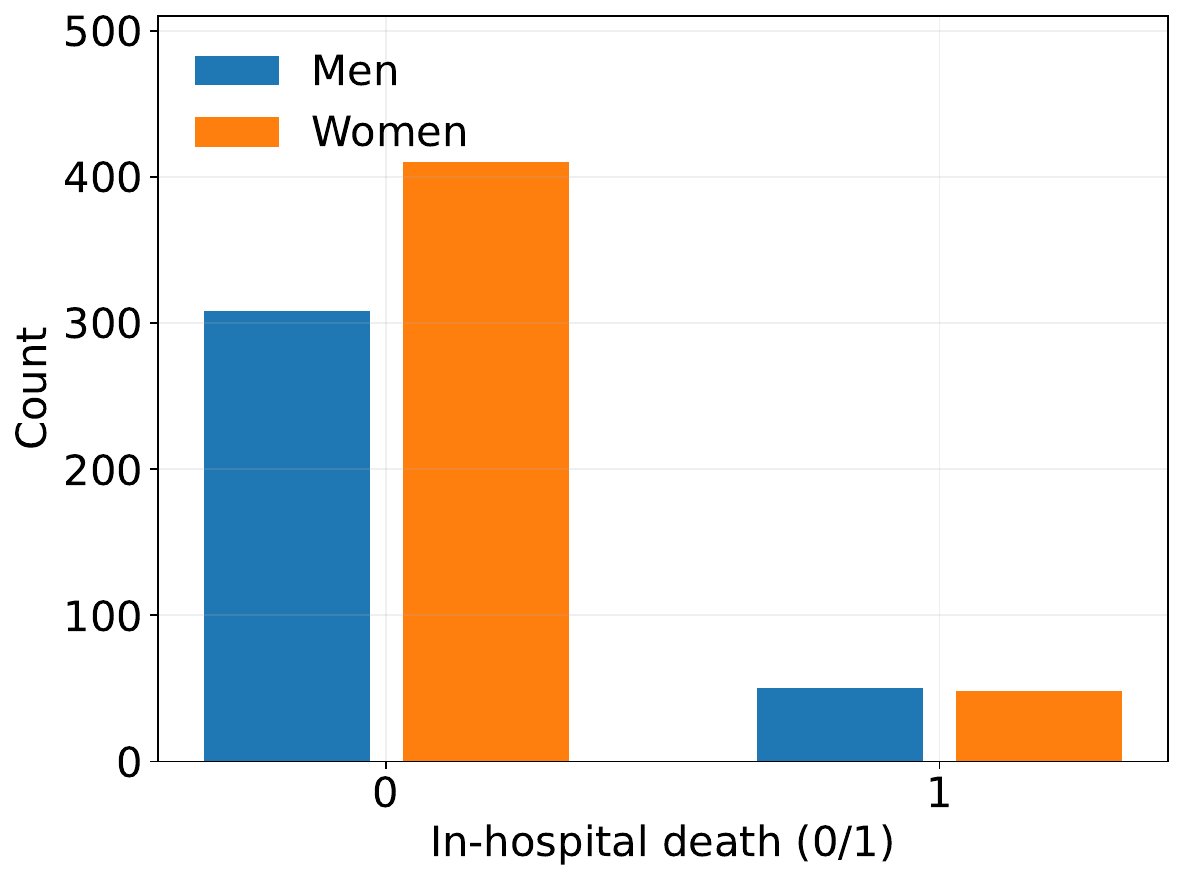}}
    \caption{Mortality outcome histogram for HR quartile two data subset.}
    \label{fig:hr_quartile2_hist}
\end{figure}

\begin{table}[!ht]
    \caption{Covariate mean differences and group-specific coefficients in HR quartile two. $\Delta\mu$ denotes the difference in covariate means between men and women within HR quartile two (men minus women). $\beta_{\text{Men}}$ and $\beta_{\text{Women}}$ are coefficients from group-specific linear probability models for in-hospital mortality.}
    \label{tab:delta_mu_betas}
    \centering
    \small
    \begin{tabular}{lrrr}
    \toprule
    Covariate 
    & $\Delta\mu$ 
    & $\beta_{\text{Men}}$ 
    & $\beta_{\text{Women}}$ \\
    \midrule
    Intercept & $0.0000$   & $1.1813$  & $-0.8515$ \\
    Age       & $3.8938$   & $-0.0000$ & $0.0039$ \\
    ICU Type  & $0.0095$   & $0.0173$  & $0.0116$ \\
    HR        & $0.4982$   & $-0.0026$ & $0.0056$ \\
    MAP       & $-0.6669$  & $0.0004$  & $0.0010$ \\
    Temp      & $0.3604$   & $-0.0230$ & $0.0028$ \\
    Urine     & $-39.8976$ & $-0.0001$ & $-0.0001$ \\
    \bottomrule
    \end{tabular}
\end{table}
    
\subsection{U.S.\ Labor Force Example}
\label{appendix:us_labor_force}
This appendix provides full details on how we constructed the U.S.\ labor force subgroups, prepared the analytic dataset, estimated the (N)OBD, and identified sign reversals for all subgroups. 
The raw data come from the 2016 American Community Survey 1-Year Public Use Microdata Sample, which contains 1,625,282 observations.
We apply the same sample construction as \citet{bach2024heterogeneity}, who study the gender wage gap.
We restrict the sample to adults between the ages of 25 and 65, who were employed and at work as civilians, worked at least 35 hours per week, at least 50 weeks per year, and with both earnings and income above \$12,687.50 (which corresponds conservatively to an hourly wage of \$7.25).
The resulting sample contains 871,849 observations.

We purposely construct a large set of analyses to search for sign flips in different subsets of the data, comprising all combinations of the factors in \cref{tab:us_census_design}.

\begin{table}[!ht]
    \caption{Settings for sign flip search in the U.S.\ labor force dataset.}
    \label{tab:us_census_design}
    \centering
    \footnotesize
    \setlength{\tabcolsep}{4pt}
    \begin{tabular}{@{}ll@{}}
    \toprule
    Subset \( S \) & Group (\( H, K \)) \\
    \midrule
    26 industries (NAICS 2) & Race (black, white)                  \\
    50 states + D.C,        & Sex (male, female)                   \\
                            & Nativity (native born, foreign born) \\
    \bottomrule
    \end{tabular}
\end{table}

There are \( 77 \times 3 = 231 \) combinations of subsets and groups, each corresponding to a particular subgroup an analyst could perform the (N)OBD within.
We consider two outcomes: the natural logarithm of income (the outcome from \citet{bach2024heterogeneity}), and a binary indicator for having health insurance.
We exclude 11 cases where there are fewer than 50 observations in either $H$ or $K$ (), or $|\Delta Y| < 50$; this intended to focus more on analyses where the data may be informative and differences in outcomes may be practically meaningful.
The result is 217 subgroups for log-income, and 183 subgroups for health insurance status.
For each analysis, we assemble a candidate feature set from state of residence, industry of occupation (NAICS 2 code), sex, race (black, white, or other), nativity, completion of a Bachelor's degree, and marital status.
We exclude features that define the particular subset and group at hand, to produce a feature set for model fitting.
For example, when examining male-female differences in the state of Wyoming, we remove sex and state from the feature set.

We then perform the (N)OBD by estimating $\E[Y \mid X, g, s]$ with the specified model, 
where \( Y \) and \( X \) are the outcome and covariates, 
\( g \) indexes the group (i.e. male or female sex);
and \( s \) indicates a subset of the data (i.e., the state of Utah).
The models are:
\begin{itemize}
    \item Ordinary least squares regression.
    \item Logistic regression (when $Y$ is binary), unpenalized.
    \item lightgbm (gradient-boosted trees). We set a learning rate of 0.25, number of leaves to 7, and select the number of trees by 3-fold cross-validation with early stopping at 10 rounds. All other parameters are set to default values.
\end{itemize}
We attempted to implement neural networks for regression (both outcomes) and classification (health insurance status only), but faced repeated software errors while optimizing model weights.
We use the same bootstrap inference procedure as in \cref{appendix:icu_details} for ordinary least squares and logistic regression. 
We do not perform inference for lightgbm due to computational costs of bootstrapping; future work can assess the statistical significance of the results with appropriate resources.
All computation was performed on an Apple M2 Max chip with 32GB ram.

\cref{tab:flips_census_income,tab:flips_census_insurance} report the number of sign flips for the log income and health insurance status outcomes, respectively.
Sign flips occur in both the explained and unexplained components for all tested model types.

\begin{table}[!ht]
    \caption{Number of subgroups (out of $217$) exhibiting a sign flip in the explained and unexplained component, for the earnings outcome. Columns $10\%$, $5\%$, and $1\%$ report the number of flipped data subsets that are significant at the corresponding level. No significance testing was performed for lightgbm due to computational costs of bootstrap inference.}
    \label{tab:flips_census_income}
    \centering
    \small
    \begin{tabular}{l*{12}{c}}
    \toprule
          & \multicolumn{4}{c}{Either Component} & \multicolumn{4}{c}{Explained Component} & \multicolumn{4}{c}{Unxplained Component} \\
          \cmidrule(lr){2-5} \cmidrule(lr){6-9} \cmidrule(lr){10-13}
    Model       & Total & $10\%$ & $5\%$ & $1\%$ & Total & $10\%$ & $5\%$ & $1\%$ & Total & $10\%$ & $5\%$ & $1\%$ \\
    \midrule
    Linear      & 35 & 14 & 11 & 7 & 24 & 10 & 8 & 5 & 11 & 4 & 3 & 2 \\
    lightgbm    & 29 & -  & -  & - & 23 & -  & - & - & 6  & - & - & - \\
    \bottomrule
    \end{tabular}
\end{table}

\begin{table}[!ht]
    \caption{Number of subgroups (out of $183$) exhibiting a sign flip in the explained and unexplained component, for the (binary) health insurance status outcome. Columns $10\%$, $5\%$, and $1\%$ report the number of flipped data subsets that are significant at the corresponding level. No significance testing was performed for lightgbm due to computational costs of bootstrap inference.}
    \label{tab:flips_census_insurance}
    \centering
    \small
    \begin{tabular}{l*{12}{c}}
    \toprule
          & \multicolumn{4}{c}{Either Component} & \multicolumn{4}{c}{Explained Component} & \multicolumn{4}{c}{Unxplained Component} \\
          \cmidrule(lr){2-5} \cmidrule(lr){6-9} \cmidrule(lr){10-13}
    Model       & Total & $10\%$ & $5\%$ & $1\%$ & Total & $10\%$ & $5\%$ & $1\%$ & Total & $10\%$ & $5\%$ & $1\%$ \\
    \midrule
    Linear      & 60 & 35 & 33 & 24 & 44 & 33 & 31 & 23 & 17 & 2 & 2 & 1 \\
    Logistic    & 77 & 23 & 16 & 10 & 60 & 23 & 16 & 10 & 25 & 1 & 0 & 0 \\
    lightgbm    & 97 & -  & -  & -  & 23 & -  & - & - & 6  & - & - & - \\
    \bottomrule
    \end{tabular}
\end{table}

\section{A simulated example from healthcare}
\label{sec:simulated_example}
The results in \cref{sec:real_example} may raise the question of whether sign reversals are always consequences of model misspecification or sampling variability? 
\cref{sec:signflips} provide theoretical arguments that sign reversals are not, in fact, purely due to misspecification or noise.
We complement those arguments here with a simulated example that exhibits a sign reversal under correct model specification and in population-level parameters.


\textbf{Motivation.}
While our example is purely simulated, it is motivated by the following real-life concerns and observations.
Body mass index (BMI) is a widely recognized and routinely measured marker of wellness \cite{heymsfield2016} and is strongly linked to cardiometabolic outcomes, such as elevated blood pressure \cite{brown2000body}.
A researcher studying cardiac health may use commonly observed covariates like BMI to investigate whether differences in blood pressure between two groups are explainable by differences in basic markers of health between the groups, or differences in how these markers map to blood pressure.

Our example imagines two groups of adults in primary care being followed for elevated blood pressure, using systolic blood pressure (SBP) at a follow-up visit as the outcome $Y$. The BMI ($X$) of each adult is measured at an intake visit.
The first, Group $H$, comes from a higher-resource setting, such as an urban neighborhood with structured care pathways, frequent follow up, and better access to healthy food options. 
The second, Group $K$, comes from a lower-resource setting, where patients face barriers such as longer travel to clinics, infrequent follow up, and limited access to nutritious food.
These factors shape both the composition of each group (including average BMI) and the structure of the SBP--BMI relationship. We follow a clinical literature finding an approximately linear relationship between BMI and SBP \cite{kaufman1997, cappuccio2008, chen2023}; in particular, we assume a separate linear relationship in each group.

While an OBD cannot establish causality without further assumptions, it can suggest where one might look for potential drivers of observed inter-group disparities.
If the explained component is effectively zero and the unexplained component (differences in $Y|X$) dominates (with a sign in the direction of the observed outcome difference), it may highlight the need to investigate differences in medical care, such as adherence support or access to antihypertensives.
If the unexplained component is effectively zero and the explained component (differences in $X$) dominates (with a sign in the direction of the observed outcome difference), it may be most impactful to target interventions on weight management, nutrition, and physical activity. 

\textbf{Simulated data.}
We generate the BMI ($X$, in kg/m$^2$) of an individual in group \( g \) from a group-specific normal distribution.
We simulate \( N = 5000 \) individuals, divided across two groups. Body mass index (BMI) is drawn from normal distributions reflecting realistic population profiles:
\begin{align*}
    \text{BMI}_g &\sim \mathcal{N}(\mu_g ,\, \sigma_g^2) \quad \text{kg/m}^2,
\end{align*}
where \( \mu_H = 25 \), \( \mu_K = 27 \), and \( \sigma_H = \sigma_K = 4 \). Within each group, systolic blood pressure is modeled as a noisy, linear function of BMI:
\begin{align}
    \text{SBP}_g &= \alpha_g + \beta_g \times \text{BMI}_g + \varepsilon_g.
\end{align}
Here $\alpha_g \in \mathbb{R}$ is the group-specific intercept (in units of mmHg); $\beta_g \in \mathbb{R}$ is the group-specific slope, representing the average change in SBP (mmHg) per unit increase in BMI (kg/m$^2$); and $\varepsilon$ reflects individual variability in SBP that is uncorrelated with BMI. We take $\varepsilon$ independent and identically distributed across adult patients, with the same (mean-zero Gaussian) distribution in each group.

The noise term \( \varepsilon_g \) captures unobserved individual-level variation within each group, such as daily measurement fluctuations, dietary salt intake, or transient stress. 
Group~H (higher-resource settings) has a lower mean BMI, consistent with greater access to healthy food, safe environments for physical activity, and preventive care. 
In contrast, Group~K (lower-resource or marginalized settings) has a mean BMI two units higher, reflecting structural disparities related to nutrition, transportation, and economic constraints. 
The shared standard deviation of 4~kg/m\(^2\) reflects typical within-group heterogeneity in adult primary care populations \citep{block2013population}.

We assume \( \beta_K > \beta_H \), reflecting that in Group~K, each additional unit of BMI increases SBP more steeply, due to slower titration, less frequent follow up, greater adherence variability, and fewer lifestyle resources.
In contrast, the flatter slope in Group~H reflects partial buffering through timely medication adjustments, consistent coverage, and access to lifestyle support.
The intercepts \( \alpha_H \) and \( \alpha_K \) are chosen so that the regression lines intersect near the midpoint of the BMI range, ensuring that mean SBP differences are primarily driven by BMI distribution and slope disparity.
\cref{fig:bmi_sbp} shows the fitted linear models to the synthetic data.

\begin{table}[h]
\caption{Key chosen and derived parameters by group. BMI and SBP are in kg/m$^2$ and mmHg respectively. $\alpha$ is in mmHg; $\beta$ is in mmHg per kg/m$^2$.}
\label{tab:dgp_short}
\begin{center}
\small
\begin{tabular}{lcccc}
\toprule
Group & Mean BMI & $\alpha$ & $\beta$ & Mean SBP \\
\midrule
$H$ & 25.0 & 110.4 & 1.0 & 135.4 \\
$K$ & 27.0 & 100.0 & 1.4 & 137.8 \\
\bottomrule
\end{tabular}
\end{center}
\vskip -0.1in
\end{table}

\cref{fig:bmi_sbp} shows the population SBP--BMI relationships for both groups (two lines). The figure also shows data samples
to illustrate the BMI distribution in each group --- as well as how the intercepts, slopes, and BMI distributions interact.
Overall, Group $H$ has lower average SBP than Group $K$, owing both to lower average BMI and a shallower slope relating BMI to SBP, perhaps reflecting improved access and adherence to treatment, or availability of non-pharmacological support.
However, Group $H$ has a higher baseline SBP ($\alpha_H$), perhaps reflecting genetic factors in Group $H$.

\begin{figure}[h]
\centerline{\includegraphics[width=0.5\linewidth]{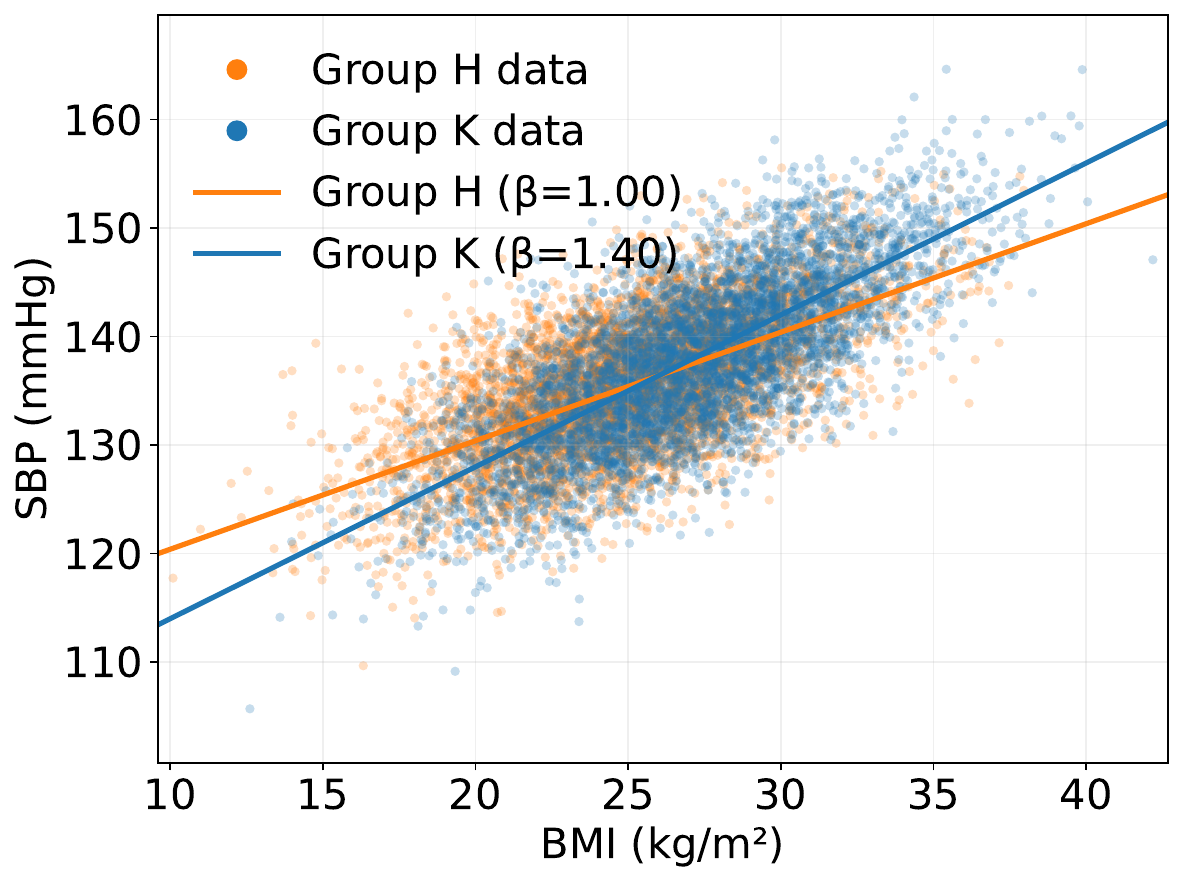}}
\caption{Population linear models of SBP on BMI by group and data samples.}
\label{fig:bmi_sbp}
\end{figure}

\textbf{Analysis.} Since we have access to all population quantities in this example, there is no need to fit models or check statistical significance.
We can compute the OBD directly from the known population distributions.

\textbf{Two different conclusions.}
\cref{tab:ob_decomp} summarizes the OBD results under each reference group.
For both choices of reference group the explained components are negative, so both choices of reference imply that $H$'s lower average SBP is due in part to its lower average BMI.
However, the sign of the unexplained OBD component flips depending on the choice of the reference group.
When group $H$ is the reference, the negative unexplained component implies that institutional factors further reduce group $H$'s blood pressure relative to group $K$.
However, when group $K$ is the reference, institutional factors appear to actually \emph{increase} group $H$'s average blood pressure relative to group $K$.
Intuitively, because blood pressure in group $K$ is so much more responsive to BMI, group $H$'s lower average BMI appears so good as to ``over-explain'' the difference in SBP; the unexplained component is then forced to be positive to compensate.

In this example, the sign change is in the unexplained component rather than the explained component (cf.\ \cref{sec:real_example}). Since we simulate the data according to group-specific linear models, there is no misspecification. And since we work directly with population quantities, the difference in OBD conclusions is not a product of sampling error.

\begin{table}[H]
\caption{Population Oaxaca–Blinder decomposition of mean SBP differences. All values are in mmHg.}
\label{tab:ob_decomp}
\begin{center}
\small
\begin{tabular}{lrrr}
\toprule
Reference & Explained & Unexplained & Total gap \\
\midrule
$H$ & $-2.0$ & $\textbf{-0.4}$ & $-2.4$ \\
$K$ & $-2.8$ & $\textbf{0.4}$  & $-2.4$ \\
\bottomrule
\end{tabular}
\end{center}
\vskip -0.1in
\end{table}

\section{Theory}
\label{app:theory}

In this section, we present the proof of \cref{thm:unexplained_ordering_sensitivity} and an example showing that the difference between the two decompositions can be made arbitrarily large even when the mean gap is fixed.

\subsection{Proof of \cref{thm:unexplained_ordering_sensitivity}}
\label{sec:sign_flip_proof}
\begin{proof}
Using \cref{def:unexplained_sign_flip}, we want to show that 
\[
\sign\!\left( \mu_K^T \Delta\beta + \Delta\alpha \right)
\neq
\sign\!\left( \mu_H^T \Delta\beta + \Delta\alpha \right)
\iff
\begin{aligned}
&\mu_H^T \Delta\beta \neq \mu_K^T \Delta\beta, \\
&\min\{\mu_H^T \Delta\beta, \mu_K^T \Delta\beta\}
<
-\Delta\alpha
<
\max\{\mu_H^T \Delta\beta, \mu_K^T \Delta\beta\}.
\end{aligned}
\]

$\Rightarrow$: Suppose that $\sign\left( \mu_K^T \Delta\beta + \Delta\alpha \right) \neq \sign\left( \mu_H^T \Delta\beta + \Delta\alpha \right)$. If $\mu_H^T \Delta\beta = \mu_K^T \Delta\beta$, then both $\mu_g^T \Delta\beta + \Delta\alpha$ would have the same sign, which contradicts our assumption of opposite signs. Thus, we must have $\mu_H^T \Delta\beta \neq \mu_K^T \Delta\beta$.

Next, suppose that $\mu_K^T \Delta\beta + \Delta\alpha >0$ and $\mu_H^T \Delta\beta + \Delta\alpha <0$. Then, 
\[
\mu_K^T \Delta\beta > -\Delta\alpha \quad \text{and} \quad \mu_H^T \Delta\beta < -\Delta\alpha.
\]
Combining these two inequalities, we have 
\begin{equation}
  \label{eq:intermediate_ineq1}
  \mu_H^T \Delta\beta < -\Delta\alpha < \mu_K^T \Delta\beta.
\end{equation}
Similarly, if $\mu_K^T \Delta\beta + \Delta\alpha <0$ and $\mu_H^T \Delta\beta + \Delta\alpha >0$, then
\begin{equation}
  \label{eq:intermediate_ineq2}
  \mu_K^T \Delta\beta < -\Delta\alpha < \mu_H^T \Delta\beta.
\end{equation}
Combining \cref{eq:intermediate_ineq1,eq:intermediate_ineq2}, we conclude that
\[
  \min\{\mu_H^T \Delta\beta, \mu_K^T \Delta\beta\} < -\Delta\alpha < \max\{\mu_H^T \Delta\beta, \mu_K^T \Delta\beta\}.
\]

$\Leftarrow$; We now show the converse direction. Suppose that:

$(1)$ $\mu_H^T \Delta\beta \neq \mu_K^T \Delta\beta$ and 

$(2)$ $\min\{\mu_H^T \Delta\beta, \mu_K^T \Delta\beta\} < -\Delta\alpha < \max\{\mu_H^T \Delta\beta, \mu_K^T \Delta\beta\}$. 

We want to show that $\sign\left( \mu_K^T \Delta\beta + \Delta\alpha \right) \neq \sign\left( \mu_H^T \Delta\beta + \Delta\alpha \right)$.
$(2)$ implies that one of the following must be true:
\[
  \mu_H^T \Delta\beta < -\Delta\alpha < \mu_K^T \Delta\beta, \quad \text{or} \quad \mu_K^T \Delta\beta < -\Delta\alpha < \mu_H^T \Delta\beta.
\]
If the first one holds, then $\mu_H^T \Delta\beta + \Delta\alpha <0$ and $\mu_K^T \Delta\beta + \Delta\alpha >0$, that is, the unexplained components have opposite signs. If the second holds, a symmetric argument yields the same conclusion as desired. 


\end{proof}





\subsection{Unbounded sensitivity at fixed mean gap}
 \label{ex:unbounded}
In this section we present a toy example where we can have arbitrarily large flip magnitudes. Consider the following example in dimension \( 1 \), where we assume we have a parameter \( L > 0 \) which controls the magnitude of the covariate shift, and \( \varepsilon > 0 \) which controls the deviation in slope coefficients. We assume the mean vector is given by \( (\mu_K, \mu_H) = (0, L) \), and the slope parameters deviate symmetrically from \( \frac{1}{L} \), so that \( \beta_H = \frac{1}{L} + \frac{\varepsilon}{2} \), \( \beta_K = \frac{1}{L} - \frac{\varepsilon}{2} \). Then, the explained and unexplained components under both reference groups are:
\begin{align*}
    E^{(H)} &= \Delta\mu \cdot \beta_H = L\left(\frac{1}{L} + \frac{\varepsilon}{2}\right) = 1 + \frac{L\varepsilon}{2}, \quad U^{(H)} = \mu_K \cdot \Delta\beta = -\frac{L\varepsilon}{2}, \\
    E^{(K)} &= \Delta\mu \cdot \beta_K = L\left(\frac{1}{L} - \frac{\varepsilon}{2}\right) = 1 - \frac{L\varepsilon}{2}, \quad U^{(K)} = \mu_H \cdot \Delta\beta = +\frac{L\varepsilon}{2}.
\end{align*}

Hence, the difference between the two decompositions is \( E^{(H)} - E^{(K)} = L\varepsilon \), which can be made arbitrarily large as \( L \to \infty \), even though the total mean gap is always fixed at \( \Delta_Y = 1 \). Similarly, the unexplained components differ by \( U^{(K)} - U^{(H)} = L\varepsilon \).

\begin{center}
\begin{tabular}{ccccccc}
\toprule
\(L\) & \(\varepsilon\) 
& \(E^{(H)}\) & \(E^{(K)}\) 
& \(U^{(H)}\) & \(U^{(K)}\) 
& Flip? \\
\midrule
20    & 0.1  & \(2\)     & \(0\)      & \(-1\)    & \(1\)     & \checkmark \\
200   & 0.1  & \(11\)    & \(-9\)     & \(-10\)   & \(10\)    & \checkmark \\
1000  & 0.1  & \(51\)    & \(-49\)    & \(-50\)   & \(50\)    & \checkmark \\
\bottomrule
\end{tabular}
\end{center}

Note that the sign flip is symmetric by construction, but the magnitude of the discrepancy grows linearly in \( L \), highlighting that the attribution gap between reference groups can be stretched arbitrarily far.

\subsection{Decision Tree Representation of \cref{eq:sign_flip_condition}}

To illustrate \cref{thm:unexplained_ordering_sensitivity}, we represent the condition in \cref{eq:sign_flip_condition} as a decision tree in \cref{alg:decision_tree} and use it to scrutinize the healthcare example in \cref{sec:simulated_example}.
In this example, $\mu_H$ and $\mu_K$ denote mean BMIs, both of which are positive by definition and thus $\sign{(\mu_H \Delta\beta)} = \sign{(\mu_K \Delta\beta)} = \sign{(\Delta\alpha)}$. 
Therefore, following \Cref{alg:decision_tree}, a sign flip will occur only if $\sign{(\Delta\alpha)} \neq \sign{(\mu_H \Delta\beta)} = \sign{(\Delta\beta)}$; equivalently, it must hold that one group has a larger $\beta_g$ (more sensitive to higher BMI) but lower $\alpha_g$ (lower SBP on average).
This might occur if one group has better access to treatment but worse genetic factors, as in \cref{sec:simulated_example}.
Furthermore, a sign flip will occur if either $\abs{\mu_H \Delta\beta} < \abs{\Delta\alpha} < \abs{\mu_K \Delta\beta}$ or $\abs{\mu_K \Delta\beta} < \abs{\Delta\alpha} < \abs{\mu_H \Delta\beta}$. This condition implies a range of values over which sign flips will occur; that is, there will be at least one choice of reference group under which improved healthcare appears to help in lowering mean SBP and another ordering under which improved healthcare appears to \emph{hurt} mean SBP.

Even though \cref{thm:unexplained_ordering_sensitivity} does not restrict the magnitude of the parts of the unexplained components, in \cref{ex:unbounded} we provide an example to illustrate that these sign flips are not relegated to settings in which $\mu_g^T\Delta\beta$ are small in magnitude; in fact, we show that they can be arbitrarily large in magnitude.

  \begin{algorithm}
    \caption{Unexplained component sign flip}
    \label{alg:decision_tree}
    \begin{algorithmic}  
        \If{$\mu_H^T \Delta\beta \neq \mu_K^T \Delta\beta$}
            \If{$\sign(\mu_H^T \Delta\beta) = \sign(\mu_K^T \Delta\beta)$}
                \If{$\sign(\Delta\alpha) \neq \sign(\mu_H^T \Delta\beta)$}
                    \If{$|\mu_H^T \Delta\beta| < |\Delta\alpha| < |\mu_K^T \Delta\beta|$ \textbf{or} $|\mu_K^T \Delta\beta| < |\Delta\alpha| < |\mu_H^T \Delta\beta|$}
                        \State A sign flip occurs.
                    \Else
                    	\State There is no sign flip.
                    \EndIf
                \Else
                	\State There is no sign flip.
                \EndIf
            \ElsIf{$\sign(\mu_H^T \Delta\beta) \neq \sign(\mu_K^T \Delta\beta)$}
                \If{$\sign(\Delta\alpha) = \sign(\mu_H^T \Delta\beta)$ \textbf{and} $|\mu_K^T \Delta\beta| > |\Delta\alpha|$}
                    \State A sign flip occurs.
                \ElsIf{$\sign(\Delta\alpha) \neq \sign(\mu_H^T \Delta\beta)$ \textbf{and} $|\mu_H^T \Delta\beta| > |\Delta\alpha|$}
                    \State A sign flip occurs.
                \Else
                    \State There is no sign flip.
                \EndIf
            \EndIf
        \Else
            \State There is no sign flip.
        \EndIf
    \end{algorithmic}
\end{algorithm}

\subsection{Relationship Between Unexplained Component Sign Flips and Omitted Variables}
\label{appendix:ovb}
Here, we assume the same standardization of covariates as in \cref{sec:sign_flip_fraction}; that is, we assume $\mu_K = \mathbf{0}$ and $\mu_H = \mathbf{1}$.
Under this normalization, unexplained component sign flips require \( | \mathbf{1}^T \Delta \beta | > | \Delta \alpha | \), so any data-generating processes that jointly produces large \( | \mathbf{1}^T \Delta \beta | \) and \( | \Delta \alpha | \) would make sign flips less likely than \cref{prop:unexplained_fraction} suggests.
In this section, we present a mechanism that can jointly produce large \( | \mathbf{1}^T \Delta \beta | \) and \( | \Delta \alpha | \) without invoking \cref{ass:no_sign_flip}.

The mechanism involves the researcher's dataset omitting covariates that are related to both \( X \) and \( Y \).
For example, hospital records only contain measurements that doctors have ordered, not the results of important tests a doctor may have skipped.
Similarly, census data used for studying wage gaps typically lack detailed information about skills and experience from a worker's resume and work history.
Finally, in practice, nonlinear transformations of the covariates are also effectively omitted from analysis, like when a researcher does not include appropriate polynomial terms in a regression specification.
Let \( Z \in \mathbb{R}^{d'} \) be a vector of such \textit{omitted covariates}.
We will establish that both \( \Delta \beta \) and \( \Delta \alpha \) can depend on the relationship between \( Z, X, \) and \( Y \).

Our argument requires us to specify the conditional expectations \( \E_g[Y \mid X, Z] \) and \( \E_g[Z \mid X] \).
To establish a simple example, we will assume that both are linear and satisfy
\begin{align}
  \label{eq:y_on_x_z}
  \E_g[Y \mid X, Z] &= \omega_g + X^\top \theta_g + Z^\top \gamma_g, \\
  \label{eq:z_on_x}
  \E_g[Z \mid X] &= \zeta_g + \Psi_g X,
\end{align}
where $\omega_g \in \R$, $\theta_g \in \R^d$, $\gamma_g \in \R^{d'}$, $\zeta_g \in \R^{d'}$, and $\Psi_g \in \R^{d' \times d}$.
Relaxing these assumptions may lead to other non-uniform relationships between \( \Delta \beta \) and \( \Delta \alpha \).

Recall that the OBD is based on regressing the outcome against the observed covariates for each group \(g \in \{H, K\}\) via
\begin{align}
    \label{eq:y_on_x}    
    \E_g[Y \mid X] = \alpha_g + X^\top \beta_g.
\end{align}
We will start by showing that \( \alpha_g \) and \( \beta_g \) depend on \( \gamma_g, \zeta_g, \), and \( \Psi_g \). 
To do so, apply the law of iterated expectations to \cref{eq:y_on_x}, plug in \cref{eq:y_on_x_z,eq:z_on_x}, and then rearrange terms:
\begin{align}    
    \E_g[Y \mid X] 
    &= \E_g[ \E_g[Y \mid X, Z] \mid X ] \nonumber \\
    &= \E_g[ \omega_g + X^\top \theta_g + Z^\top \gamma_g \mid X ] \nonumber  \\
    &= \omega_g + X^\top \theta_g + \E_g[Z \mid X]^\top \gamma_g \nonumber \\
    &= \omega_g + X^\top \theta_g + (\zeta_g + \Psi_g X)^\top \gamma_g \nonumber \\ 
    \label{eq:ovb}
    &= \underbrace{\omega_g + \zeta_g^\top \gamma_g}_{=\alpha_g} + \E_g[X]^\top \underbrace{(\theta_g + \Psi_g^\top \gamma_g)}_{=\beta_g}.
\end{align}
\cref{eq:ovb} is the well-known \textit{omitted variables bias} formula.
It describes how the slope and intercept of a regression equation depend on variables that are not included in the equation.
We can use this formula to express the difference in slopes and intercepts across the two populations as 
\begin{align}  
  \label{eq:delta_alpha_gamma}
  \Delta \alpha &= \Delta \omega + \Delta \zeta^\top \Delta \gamma,  \\
  \label{eq:delta_beta_gamma}
  \Delta \beta  &= \Delta \theta + \Delta \Psi^\top \Delta \gamma,
\end{align}
where \( \Delta \) takes the $H-K$ difference in the quantity that follows; for example, \( \Delta \omega := \omega_H - \omega_K \).
These equations show that \( \Delta \alpha \) and \( \Delta \beta \) depend on the role of omitted covariates in the data-generating process.
Crucially, they provide a mechanism to produce \( | \Delta \alpha | \) and \( | \mathbf{1}^\top \Delta \beta | \) that are jointly large:
if both the Z-X and Z-Y relationships change in complementary ways, then both \( \Delta \zeta^\top \Delta \gamma \) and \( \mathbf{1}^\top \Delta \Psi^\top \Delta \gamma \) will be large. 
If these terms are not counterbalanced by \( \Delta \omega \) and \( \Delta \theta \), the result will be large \( | \Delta \alpha | \) and \( | \mathbf{1}^\top \Delta \beta | \).

\subsection{Volume of the Sign-Flip Parameter Region}
We first show how to choose units for the covariates conveniently, so that \( \Delta \mu = \mathbf{1} \).
This will simplify volume calculations for the regions of parameter space that flip the explained and unexplained components of the OBD.
\begin{lemma}
  \label{lemma:units}
  If \( \mu_H \neq \mu_K \) elementwise, we can choose units for \( X \) such that \( \mu_K := \E_K[X] = \mathbf{0} \) and \( \mu_H := \E_H[X] = \mathbf{1} \).
  \begin{proof}  
  Consider the transformation \( \tilde{X} = (X - \mu_K) / (\mu_H - \mu_K) \), where the division is elementwise.
  This is well-defined as long as \( \mu_H \neq \mu_K \) elementwise.
  Since \( \tilde{X} \) is a linear transformation of \( X \), it corresponds to a change in units.
  Under these new units, \( \E_K[\tilde{X}] = \mathbf{0} \) and \( \E_H[\tilde{X}] = \mathbf{1} \).  
  \end{proof}
\end{lemma}

\subsubsection{Proof of \cref{prop:explained_fraction}}
\label{proof:explained_fraction}
Apply \cref{lemma:units} so that \( \E_K[X] = \mathbf{0} \), \( \E_H[X] = \mathbf{1} \), and therefore \( \Delta \mu = \mathbf{1} \).
\cref{eq:sign_flip_condition_explained_neq} now simplifies to 
\[ \sign\!\left(\mathbf{1}^\top \beta_{H}\right) \neq \sign\!\left(\mathbf{1}^\top \beta_{K}\right). \]
A Uniform$(-M, M)$ distribution on \( \beta_H \) and \( \beta_K \) corresponds to distributions on \( \mathbf{1}^\top \beta_{H} \) and \( \mathbf{1}^\top \beta_{K} \) that are identical and symmetric around zero.
As such, the signs of \( \mathbf{1}^\top \beta_{H} \) and \( \mathbf{1}^\top \beta_{K} \) will disagree in 50\% of the parameter space.

\subsubsection{Proof of \cref{prop:unexplained_fraction}}
\label{proof:unexplained_fraction}
Apply \cref{lemma:units} so that \( \E_K[X] = \mathbf{0} \), \( \E_H[X] = \mathbf{1} \), and therefore \( \Delta \mu = \mathbf{1} \).
For the rescaled covariates, the conditions in \cref{thm:unexplained_ordering_sensitivity} simplify.
In particular, the unexplained component flips sign if and only if \( \mathbf{1}^\top \Delta \beta \neq 0 \), \( \sign (\Delta \alpha) \neq \sign (\mathbf{1}^\top \Delta \beta) \), and \( | \mathbf{1}^\top \Delta \beta | > |\Delta \alpha| \).

To compute the fraction of $C_M$ for which there are sign flips, we equivalently compute the probability of a sign flip under an independent $\mathrm{Uniform}(-M,M)$ measure on each element of the vector $(\beta_H, \beta_K, \alpha_H, \alpha_K) \in \R^{2d+2}$.
The condition \( \mathbf{1}^\top \Delta \beta \neq 0 \) holds almost surely under the assumed Uniform$(-M,M)$ measure, since \( \mathbf{1}^\top \Delta \beta \) is a non-degenerate linear combination of continuously distributed random variables, so we can focus on the latter two conditions.
To do so, we will determine the distributions of  \( \Delta \alpha \) and \( \mathbf{1}^\top \Delta \beta \).

The sum of \( n \) Uniform$(0,1)$ random variables is sometimes called the Irwin-Hall(\(n\)) distribution.%
\footnote{For \(n = 2\) this is a familiar triangular distribution.}
When we place a Uniform$(-M, M)$ measure on the \( \alpha \)'s and \( \beta \)'s, there is a straightforward relationship between \( \Delta \alpha \), \( \mathbf{1}^\top \Delta \beta \), and appropriately defined Irwin-Hall random variables.
The difference \( \Delta \alpha \) is the sum of two Uniform$(-M, M)$ variables, which is equivalent to an Irwin-Hall($2$) distribution, rescaled from $[0,2]$ to $[-2M, 2M]$.
Similarly, \( \mathbf{1}^\top \Delta \beta \) is the sum of \( 2 d \) Uniform$(-M, M)$ variables, and its distribution is an Irwin-Hall($2 d$) distribution, rescaled from $[0,2d]$ to \( [-2dM, 2dM] \).

That is, let \( I_2 \sim \text{Irwin-Hall}(2) \) and \( J_{2d} \sim \text{Irwin-Hall}(2d) \) be independent.
We can represent
\begin{align*}
    &\mathbf{1}^\top\Delta \alpha = 2M (I_2 - 1) 
    &\mathbf{1}^\top\Delta \beta  = 2M (J_{2d} - d)
\end{align*}

We can use this correspondence to translate the event 
\[
\{ |\mathbf{1}^\top \Delta \beta| > |\Delta \alpha| \}
\cap
\{ \sign(\Delta \alpha) \neq \sign(\mathbf{1}^\top \Delta \beta) \}.
\]
into an equivalent event in terms independent Irwin-Hall random variables:
\begin{align*}
    Pr& \left( \{ |\mathbf{1}^\top \Delta \beta| > |\Delta \alpha| \}
    \cap
    \{ \sign(\Delta \alpha) \neq \sign(\mathbf{1}^\top \Delta \beta) \} \right) \\
    =& 
    Pr \left( \{ \Delta \alpha > 0 \} \cap \{ \mathbf{1}^\top \Delta \beta < - \Delta \alpha \} \right)
    +
    Pr \left(
    \{ \Delta \alpha < 0 \} \cap \{ \mathbf{1}^\top\Delta \beta > - \Delta \alpha \} \right) \\
    =& 
    Pr \left( \{ 2M (I_2 - 1)  > 0 \} \cap \{ 2M (J_{2d} - d) < - 2M (I_2 - 1)  \} \right)  \\
    &+
    Pr \left( \{ 2M (I_2 - 1)  < 0 \} \cap \{ 2M (J_{2d} - d) > - 2M (I_2 - 1)  \} \right) \\
    =& 
    Pr \left( \{ I_2 > 1 \} \cap \{ J_{2d} < d + 1 - I_2 \}  \right) +
    Pr \left( \{ I_2 < 1 \} \cap \{ J_{2d} > d + 1 - I_2 \} \right).
\end{align*}

That is, unexplained component sign flips have the same volume as the region where \( I_2 \) falls above its median and \( J_{2d} \) falls sufficiently far in its left tail, or vice versa.

Finally, as \( d \to \infty \), sign flips have a volume equal to
\begin{equation}
\label{eq:irwin_hall_pr}
 Pr \left( \{ I_2 > 1 \} \cap \{ J_{2d} < d \}  \right) +
    Pr \left( \{ I_2 < 1 \} \cap \{ J_{2d} > d \} \right).
\end{equation}
Since the median of \( I_2 \) is $1$ and the median of \( J_{2d} \) is $d$, and the two random variables are independent, each probability equals \( 0.25 \).
The corresponding fraction of $C_M$ leading to sign flips is therefore \( 0.5 \) as claimed.

\begin{remark} \label{remark:computing_irwin_hall}
We can calculate the probabilities in \cref{eq:irwin_hall_pr} exactly for small \( d \) by using the Irwin-Hall$(n)$ cumulative distribution function (CDF),
\begin{equation}
F(x) = \frac{1}{n!} \sum_{k=0}^{\lfloor x \rfloor} (-1)^k \binom{n}{k} (x - k)^n. \label{eq:exact_irwin_hall}
\end{equation}
However, for larger \( d \), the combinatorial terms in the CDF become difficult to evaluate numerically.
In those cases the central limit theorem allows us to approximate an Irwin-Hall$(2d)$ distribution with a \( N(d, \frac{d}{6})\) distribution.
We compute the probabilities in \cref{sec:sign_flip_fraction} for $d \leq 40$ using the exact CDF in \cref{eq:exact_irwin_hall}, and we use the normal approximation for $d > 40$.
\end{remark}

\subsubsection{Volume of Sign Flip Region Without Covariate Standardization}
\label{appendix:sign_flip_simulation}
The preceding analysis calculates the volume of the sign-flip region when covariates are standardized via \cref{lemma:units}.
Here, we repeat the exercise without standardizing any of the covariates.
Let each of \( \alpha_H, \alpha_K, \beta_H, \beta_K, \mu_H \) and \( \mu_K \) lie inside a cube of appropriate dimension, with side length \( 2M > 0 \) and center at the origin.
Deriving a closed form expression for the volume of the sign flip region is challenging, so we estimate this volume by simulation.
For each \( d \in \{1, \dots, 100 \} \) and \( 2M \in \{2, 20, 200\} \), we draw each regression parameter independently from an appropriate Uniform distribution, and then determine whether the unexplained component flips signs.
We simulate 1,000 draws for each combination of \( d \) and \( M \), and plot the proportion of unexplained component sign flips for each combination in \cref{fig:prob_unexplained_simulated}.

For the explained component, the percentage of the parameter space with sign flips appears constant at 50 \%, regardless of \( d \) or \( M \). 
This is consistent with the result for covariates standardized via \cref{lemma:units}.
For the unexplained component, the percentage of the parameter space with sign flips is nonzero for even small \( d \) and \( M \), and appears to approach \( 50\% \) as each of \( d \) and \( M \) grow.
This behavior is similar to the result for standardized covariates.
Together, these findings suggest that the conclusions from \cref{proof:explained_fraction,prop:unexplained_fraction} are not simply a result of our choice of units.

\begin{figure}[H]
    \centerline{\includegraphics[width=0.5\linewidth]{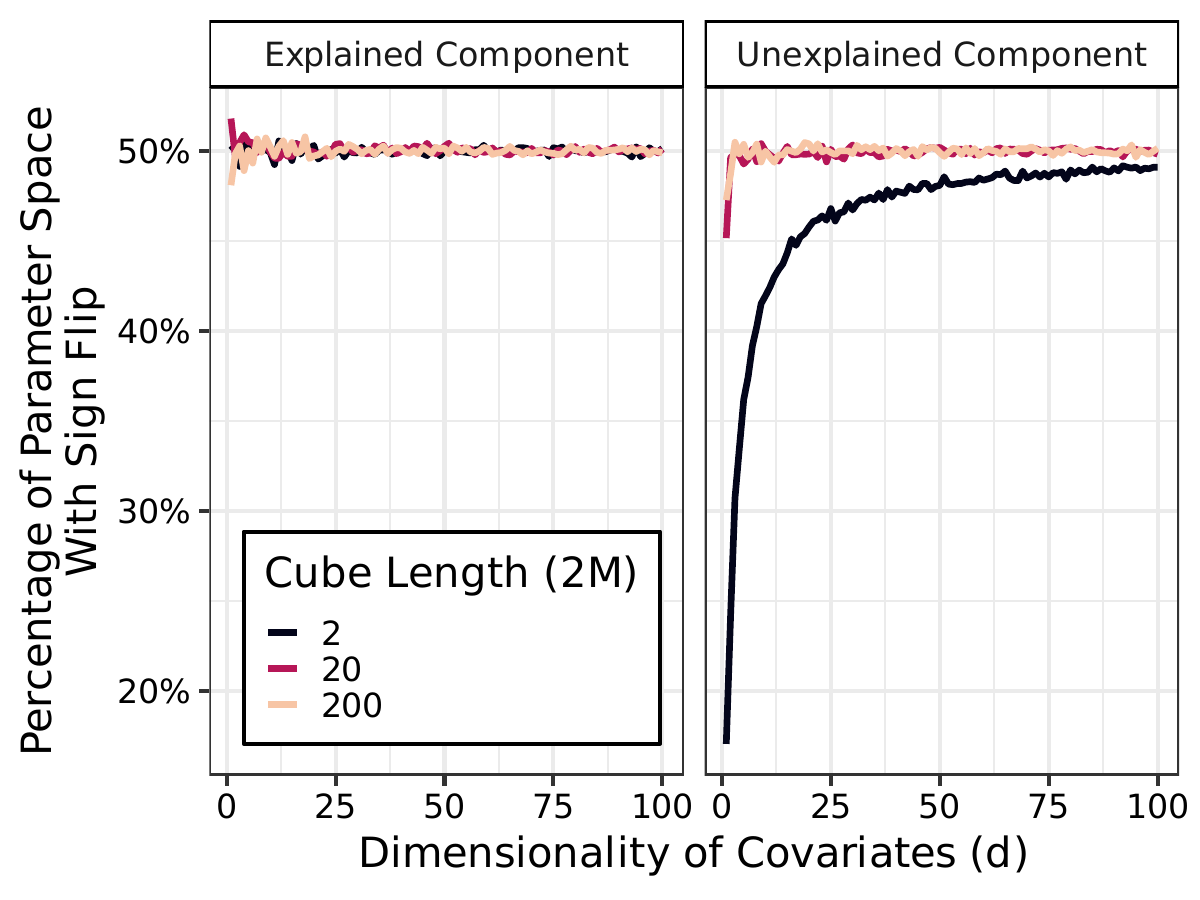}}
    \caption{Percentage of parameter space leading to unexplained component sign flips, without covariate standardization.}
    \label{fig:prob_unexplained_simulated}
\end{figure}

\subsection{Proofs from \cref{sec:reconciling_theory_empirics}}

\subsubsection{Proof of \cref{prop:no_sign_flip}}
\label{proof:no_sign_flip}
Recall that by definition
$$
	U^{(K)} := \mu_H^T \Delta\beta + \Delta\alpha, \quad \quad U^{(H)} := \mu_K^T \Delta\beta + \Delta\alpha.
$$
By assumption, $\sign(\Delta\alpha) = \sign(\mu_H^T\Delta\beta) = \sign(\mu_K^T\Delta\beta)$, and thus $\sign(U^{(H)}) = \sign(U^{(K)})$.


\section{Licenses of software assets}
\label{appendix:licenses}

Below we list all libraries and packages used in our code, along with the version pinned during our experiments and the corresponding license.

\begin{itemize}
    \item \texttt{numpy==1.26.*} --- BSD 3-Clause License
    \item \texttt{pandas==2.2.*} --- BSD 3-Clause License
    \item \texttt{scipy==1.13.*} --- BSD 3-Clause License
    \item \texttt{scikit-learn==1.5.*} --- BSD 3-Clause License
    \item \texttt{statsmodels==0.14.*} --- BSD 3-Clause License
    \item \texttt{matplotlib==3.8.*} --- BSD 3-Clause License
    \item \texttt{xgboost==2.1.*} --- Apache License 2.0
    \item \texttt{joblib==1.4.*} --- BSD 3-Clause License
    \item \texttt{tabpfn==2.0.*} --- Custom license: \url{https://huggingface.co/Prior-Labs/tabpfn_2_6/blob/main/LICENSE}
\end{itemize}

Our code also uses R version 4.5.2 (License: GNU GPL version 2); the replication code lists all used R software packages, versions, and licenses in the `renv.lock' file.


\newpage

\end{document}